\newtheorem{thm}{Theorem}[]
\newtheorem{defn}[]{Definition}
\newtheorem{lemma}{Lemma}[]
\begin{document}
%
% paper title
% Titles are generally capitalized except for words such as a, an, and, as,
% at, but, by, for, in, nor, of, on, or, the, to and up, which are usually
% not capitalized unless they are the first or last word of the title.
% Linebreaks \\ can be used within to get better formatting as desired.
% Do not put math or special symbols in the title.
\title{General Framework for Evaluating Password Complexity and Strength\thanks{This work is sponsored by the Department of the Air Force under Air Force Contract $\#$FA8721-05-C-0002.
Opinions, interpretations, conclusions and recommendations are those of the author and are not necessarily endorsed by the United States Government.}}
%
%
% author names and IEEE memberships
% note positions of commas and nonbreaking spaces ( ~ ) LaTeX will not break
% a structure at a ~ so this keeps an author's name from being broken across
% two lines.
% use \thanks{} to gain access to the first footnote area
% a separate \thanks must be used for each paragraph as LaTeX2e's \thanks
% was not built to handle multiple paragraphs
%
%
%\IEEEcompsocitemizethanks is a special \thanks that produces the bulleted
% lists the Computer Society journals use for "first footnote" author
% affiliations. Use \IEEEcompsocthanksitem which works much like \item
% for each affiliation group. When not in compsoc mode,
% \IEEEcompsocitemizethanks becomes like \thanks and
% \IEEEcompsocthanksitem becomes a line break with idention. This
% facilitates dual compilation, although admittedly the differences in the
% desired content of \author between the different types of papers makes a
% one-size-fits-all approach a daunting prospect. For instance, compsoc 
% journal papers have the author affiliations above the "Manuscript
% received ..."  text while in non-compsoc journals this is reversed. Sigh.

\author{Cem \c{S}. \c{S}ahin$^{\text\dag}$\thanks{$^{\text\dag}$These authors have contributed equally to this work}, Robert Lychev$^{\text\dag}$, and Neal Wagner% <-this % stops a space
        \\ \small{MIT Lincoln Laboratory}
        \\ \small{244 Wood St. Lexington, MA 02421}
        \\ \small{email:{cem.sahin,robert.lychev,neal.wagner@ll.mit.edu}}
}

% note the % following the last \IEEEmembership and also \thanks - 
% these prevent an unwanted space from occurring between the last author name
% and the end of the author line. i.e., if you had this:
% 
% \author{....lastname \thanks{...} \thanks{...} }
%                     ^------------^------------^----Do not want these spaces!
%
% a space would be appended to the last name and could cause every name on that
% line to be shifted left slightly. This is one of those "LaTeX things". For
% instance, "\textbf{A} \textbf{B}" will typeset as "A B" not "AB". To get
% "AB" then you have to do: "\textbf{A}\textbf{B}"
% \thanks is no different in this regard, so shield the last } of each \thanks
% that ends a line with a % and do not let a space in before the next \thanks.
% Spaces after \IEEEmembership other than the last one are OK (and needed) as
% you are supposed to have spaces between the names. For what it is worth,
% this is a minor point as most people would not even notice if the said evil
% space somehow managed to creep in.

% The paper headers
\markboth{\c{S}ahin, Lychev, Wagner:General Framework for Evaluating Password Complexity and Strength}{}
% The only time the second header will appear is for the odd numbered pages
% after the title page when using the twoside option.
% 
% *** Note that you probably will NOT want to include the author's ***
% *** name in the headers of peer review papers.                   ***
% You can use \ifCLASSOPTIONpeerreview for conditional compilation here if
% you desire.

% The publisher's ID mark at the bottom of the page is less important with
% Computer Society journal papers as those publications place the marks
% outside of the main text columns and, therefore, unlike regular IEEE
% journals, the available text space is not reduced by their presence.
% If you want to put a publisher's ID mark on the page you can do it like
% this:
%\IEEEpubid{0000--0000/00\$00.00~\copyright~2014 IEEE}
% or like this to get the Computer Society new two part style.
%\IEEEpubid{\makebox[\columnwidth]{\hfill 0000--0000/00/\$00.00~\copyright~2014 IEEE}%
%\hspace{\columnsep}\makebox[\columnwidth]{Published by the IEEE Computer Society\hfill}}
% Remember, if you use this you must call \IEEEpubidadjcol in the second
% column for its text to clear the IEEEpubid mark (Computer Society jorunal
% papers don't need this extra clearance.)

% use for special paper notices
%\IEEEspecialpapernotice{(Invited Paper)}

% for Computer Society papers, we must declare the abstract and index terms
% PRIOR to the title within the \IEEEtitleabstractindextext IEEEtran
% command as these need to go into the title area created by \maketitle.
% As a general rule, do not put math, special symbols or citations
% in the abstract or keywords.
\IEEEtitleabstractindextext{%
\begin{abstract}

Although it is common for users to select \textit{bad} passwords that can be easily cracked by attackers and many decades have been invested in researching alternative authentication methods, password-based authentication remains arguably the most widely-used method of authentication today. 
Even if password use ever becomes negligible in the future, password-based authentication is currently so pervasive 
that the transition to new authentication methods is likely be very long and complicated. 
Until that happens, to encourage users to select \textit{good} passwords, enterprises often enforce policies,
for example, by requiring passwords to meet minimal length and contain special characters. 
Such policies have been proven to be ineffectual in practice, and despite all the available tools and related ongoing research, 
stolen user credentials are often cracked by attackers before victims get a chance to react properly. 
Also, researchers and practitioners often use the notions of password complexity and strength interchangeably, which only adds to the confusion users may have with respect to password selection. 
Accurate assessment of a password's resistance to cracking attacks is still an unsolved problem, and our work addresses this challenge. 
Although the best way to determine how difficult it may be to crack a user-selected password is to check its resistance to cracking attacks employed by attackers in the wild, implementing such a strategy at an enterprise would be infeasible in practice. 
In this report we, first, formalize the concepts of password complexity and strength with concrete definitions which emphasize their differences. Our definitions are quite general. 
They capture human biases and many known techniques attackers use to recover stolen credentials in real life, 
such as brute-force attacks, mangled wordlist attacks, as well as attacks that make use of Probabilistic Context Free Grammars, Markov Models, and Natural Language Processing. 
Building on our definitions, we develop a general framework for calculating password complexity and strength that could be used in practice. 
Our approach is based on the key insight that an attacker's success at cracking a password must be defined by its available computational resources, time, function used to store that password, 
as well as the \textit{topology} that bounds that attacker's search space based on that attacker's 
available inputs (prior knowledge), transformations it can use to tweak and explore its inputs, and the path of exploration which can be based on the attacker's perceived probability of success. 
We also provide a general framework for assessing the accuracy of password complexity and strength estimators that can be used to compare other tools available in the wild. 
Finally, we discuss how our framework can be used to assess procedures that rely on storing password-protected information.
\end{abstract}

% Note that keywords are not normally used for peerreview papers.
\begin{IEEEkeywords}
Computer security, Password cracking, Password strength, Password complexity, Password entropy, Chain rule-based password strength and complexity, Search space partition, Probability.
\end{IEEEkeywords}}

% make the title area
\maketitle

% To allow for easy dual compilation without having to reenter the
% abstract/keywords data, the \IEEEtitleabstractindextext text will
% not be used in maketitle, but will appear (i.e., to be "transported")
% here as \IEEEdisplaynontitleabstractindextext when the compsoc 
% or transmag modes are not selected <OR> if conference mode is selected 
% - because all conference papers position the abstract like regular
% papers do.
\IEEEdisplaynontitleabstractindextext
% \IEEEdisplaynontitleabstractindextext has no effect when using
% compsoc or transmag under a non-conference mode.

% For peer review papers, you can put extra information on the cover
% page as needed:
% \ifCLASSOPTIONpeerreview
% \begin{center} \bfseries EDICS Category: 3-BBND \end{center}
% \fi
%
% For peerreview papers, this IEEEtran command inserts a page break and
% creates the second title. It will be ignored for other modes.
\IEEEpeerreviewmaketitle

\IEEEraisesectionheading{\section{Introduction}\label{sec:introduction}}

Although many ideas have been proposed to replace passwords, they are still considered to be the standard authentication mechanism for such services as e-mail, 
social networking, etc.
However, password-based authentication has been a notable weak point in cyber security despite decades of effort. 
For example, in 2012, $76\%$ of network intrusions exploited weak or stolen credentials (i.e., username and/or password) \cite{verizon:2012}. 
Researchers and practitioners agree that having \textit{good} passwords is critical in many applications, but users often choose 
\textit{bad} passwords  \cite{shay:2010,komanduri:2011}.
A good password should have two key properties: (i) difficult to guess by an adversary and (ii) easy to remember; 
users almost always opt for the latter rather than the former \cite{klein:1992}. 

So many services currently rely on password-based authentication that even if password use were to ever become uncommon, 
the transition to new authentication methods is expected to be long and complicated.
And so, passwords have been the focal point of many studies in recent years. 
These studies have explored a range of related topics including password cracking algorithms \cite{narayanan:2005,castelluccia:2013,weir:2009,li:2014,veras:2014,ma:2014,ur:2015}, password strength and complexity \cite{dellamico:2010,castelluccia:2012,bonneau:2012,mazurek:2013,carnavalet:2014,passfault}, user behavior with respect to password selection \cite{florencio:2007,ur:2012,weir:2010}, and password creation policies \cite{komanduri:2011,shay:2010,shay:2012,kelley:2012,shay:2014}.
In this report we focus on accurate assessment of a password's resistance to cracking attacks, a problem that we believe still remains unsolved.

As suggested in \cite{anderson:2008}, the first step in any security analysis is to define our goals and our considered threat model.
Thus, in this report we first formally define password complexity and password strength.
In our definitions, we consider an attacker whose goal is to recover a password that has been hidden by a particular protection function.

Informally, password complexity defines the usage of allowed characters, length, and symmetry of a password. 
However, in real life, attacker's success is limited by its computational resources, time, and prior knowledge as well as how the password is stored.
Password complexity does not take such details into account, so in principle it cannot provide an accurate estimate of how long it may take an attacker to crack a password. 
Note, however, that password complexity is still a good indicator of how difficult it may be to guess a password (i.e., how close it is to a random string) when information 
about how the password is protected and/or attacker's capabilities is not readily available. 
Password strength on the other hand, does take such details into account, and, thus, it is a more complete notion.  
It may, however, be very difficult to estimate in practice because it may be impossible to accurately capture changes of such parameters 
as technological advances and any additional auxiliary information available to the adversary with time. 

Both password complexity and strength require understanding of attacker's use of prior knowledge, which we express in the form of a \textit{topology} 
that bounds the attacker's search space.
A topology is defined by the attacker's knowledge about the alphabet used to create the password,  
rules that it can use to tweak and explore words created by that alphabet, 
and the exploration path of the resultant search space.
The latter can be based on the attacker's perceived probability of success.

Our definitions are general as they capture human biases and many known techniques attackers use to recover stolen credentials in real life, 
such as brute-force attacks, mangled wordlist attacks, as well as attacks that make use of Probabilistic Context Free Grammars (PCFG), Markov Models, and Natural Language Processing (NLP).

Using our definitions we develop a general framework for calculating password complexity and strength that could be used in practice.
We believe our framework provides a complete sense of security due to its extensive consideration of how attackers crack passwords in the wild.
To summarize, the contributions of our study are as follows.
\begin{enumerate}
	\item {We formalize the concepts of password complexity and password strength.}  
	\item {We propose a novel complexity measure that models current password attacks which leverage password ``topologies,'' 
			i.e. dictionaries of words together with word-mangling rules and a specification of the order in which rules are executed during an attack.}
	\item {We provide a framework for empirical evaluation of password-strength and password-complexity estimators.}
	\item{Finally, we discuss how our framework can be used in general to assess any procedure that relis on storing password-protected information.}	
%\item - Comment: this could be future work...
%Finally, we implement and test the proposed password complexity metric on a dataset of real passwords and show that it provides a more accurate measure of password complexity than provided by other widely-used password meters.
\end{enumerate}

The rest of this report is organized as follows. 
%In Section~\ref{sec:litreview}, we  review  prior  research  on password measurement, password cracking, users behavior when creating/selecting passwords, and password creation policies. 
We introduce important notation and definitions which are used throughout the paper in Section~\ref{sec:prelim}. 
Section \ref{sec:PasswordStrengthvsComplexity} formally defines notions of password complexity and password strength, and then discusses their key differences and implications. 
In Section \ref{sec:ruleBasedPasswordComplexity} we present the details of our rule-based approach to calculate password complexity and strength, and we conclude in Section \ref{sec:conclusion}.

\section{Preliminaries}
\label{sec:prelim}

In this section we introduce important notation and definitions that will be used throughout the paper.

We define a finite alphabet $\alpha$ to be a finite set of characters, and a password $p$ to be a finite string over $\alpha$.
We define the profinite set $\mathbb{X}(\alpha)$ of all finite strings over $\alpha$ as the set of all possible passwords over $\alpha$. 
Note that $\mathbb{X}(\{0,1\})=\{0,1\}^{^*} $. 
We next define a password-generating procedure that we call a rule.
\begin{defn}[Rule]
\label{def:rule}
A rule, denoted by $\xi$, is a function $\xi:\mathbb{X}(\alpha)\times\{0,1\}^{^*} \rightarrow 2^{\mathbb{X}(\alpha)}$ that takes as input a finite alphabet $\alpha$ together with a finite bit string $aux$ and outputs a subset of $\mathbb{X}(\alpha)$. 
\end{defn}
Here, $aux$ is any auxiliary information that can be used to describe password-policy requirements, e.g., password minimum and maximum length, usage of capital letters and numbers, etc. 

We can view $aux$ as a logical formula specifying the requirements that users have to satisfy when selecting a password. 
It follows that when at least two rules, $\xi_i,\xi_j,\dots$, are combined to produce a new rule, $\xi_k$, $aux_k$ must be interpreted as $aux_k=aux_i \wedge aux_j \wedge \dots$ (i.e., the resulting $aux_k$ should satisfy the requirements corresponding to $aux_i$ and $aux_j$ and all other $aux$'s in this combination.) 

\begin{defn}[Rule Set]
\label{def:ruleset}
A rule set with respect to a rule $\xi$, denoted by $\{x_1,x_2,\dots\}$, is the combination of infinitely countable passwords defined by a rule, $\xi:\mathbb{X}(\alpha)\times\{0,1\}^{^*} \rightarrow 2^{\mathbb{X}(\alpha)}$ . 
\end{defn}

\begin{defn}[Combination of Rules]
\label{def:unionofRules}
The combination of any finite set of rules $\Xi=\{\xi_1,\dots,\xi_k\}$ over some finite alphabets $\alpha_1,\dots,\alpha_k\subseteq\alpha$ is the union of the outputs of those rules $\cup_{i=1}^k (\xi_i(\alpha_i,aux_i))\subseteq\mathbb{X}(\alpha)$ for any auxiliary inputs $aux_1,\dots,aux_k$.
\end{defn}

It is important to emphasize that the union of rules may consist of a single rule. 
Note that characters of $aux$ do not have to come from $\alpha$. 
%$aux$ is any auxiliary information that can be used to describe password-policy requirements, e.g., password minimum and maximum length, usage of capital letters and numbers, etc. 
For simplicity, we require that $aux$ does not specify use of characters not in $\alpha$.
The simplest example of a use of a rule is to generate all possible passwords, or all possible English dictionary words with a certain maximal length as defined in $aux$. 

\begin{defn}[Permutation of Rules]
\label{defn:ruleTopology}
A permutation of any finite set of rules $\Xi=\{\xi_1,\dots,\xi_k\}$ over some finite alphabets $\alpha_1,\dots,\alpha_k\subseteq\alpha$ outputs a directed graph $G=(V \subseteq \Xi,E \subseteq \Xi \times \Xi)$ in which the edges impose a total ordering on the vertices for any auxiliary inputs $aux_1,\dots,aux_k$.
\end{defn}
Note that we use the permutation of rules and topology interchangeably. 

\begin{defn}[Generatable]
\label{defn:generatable}
A finite string $\sigma\in\mathbb{X}(\alpha)$ is generatable by a union of rules $\Xi=\{\xi_1,\dots,\xi_k\}$, if there exist alphabets $\alpha_1,\dots,\alpha_k\subseteq\alpha$ and auxiliary inputs $aux_1,\dots,aux_k$ such that $\sigma \in \cup_{i=1}^k (\xi_i(\alpha_i,aux_i))$. 
\end{defn}

Now, we define a password parsing, which is a partitioning of a password into segments. 
\begin{defn}[Parsing]
\label{defn:parsing}
A parsing of a finite string $\sigma\in\mathbb{X}(\alpha)$ is a partition of its constituent characters in $\alpha$. % such that each chunk corresponds to a rule. 
\end{defn}

We refer to the set of all parsings of a password $p$ as $[P]$. 
\begin{defn}[Parsing Function]
Parsing function $\Gamma: \Xi\times p \rightarrow [p]\subseteq [P]$ conforms a union of rules $\Xi=\{\xi_1,\dots,\xi_k\}$ on a password $p$ and returns a list of parsings of $p$. 
\end{defn}
Note that if there is no predefined rule, $\Gamma$ generates all possible parsings of $p$. 

Now, we define a protection function that can be used to transform and/or store a password as a string such that the original password may be more difficult for an attacker to recover.   
\begin{defn}[Protection Function]
Protection function F$_{\alpha}$ \footnote{We will drop the subscript when it is clear which alphabet is being considered.} is a function $F_{\alpha} : \mathbb{X}(\alpha) \rightarrow \{0,1\}^{^*}$ that takes a finite string over $\alpha$ and outputs a bit string. 
\end{defn}

\begin{defn}[Adversary]
An adversary \footnote{We use an adversary and an attacker interchangeably. } is defined as a non-deterministic algorithm.
\end{defn} 

We use $Z(n)$ is \textit{negligible} in a parameter $n$ if $\exists c,  n_0 \in \mathbb{R}^+$ such that $Z(n) < \frac{1}{c^n}$, $\forall n > n_0$. 

We denote a probability distribution with $\mathcal{X}$. 
A specific event in the distribution $\mathcal{X}$ is shown as $x$ (i.e., $x\in\mathcal{X}$.) 
The probability that an event $x$ takes a specific value, $prob_{x}$ such that $0 < prob_x \le 1$. 
The sum of $prob_x$ over all possible values of $x$ is $1$, $\sum_{i=1}^N prob_x = 1$, where $i$ represents the event index and $N$ denotes the total number of possible events in $\mathcal{X}$. 
We assume that every event $x$ in $\mathcal{X}$, $\forall x\in\mathcal{X}$, is equally probable (i.e., $\mathcal{X}$ is uniformly distributed.)  
$|\mathcal{X}|$ denotes the cardinality of $\mathcal{X}$. 
%there exists an a function of $n$, $f$ such that $\forall f(n) > f(n)$ 
%$Z$ \textit{negligible} in a parameter $n$ if there exists a function of $n$, $f$. 

\subsection{Example}
\label{sec:prelim-example}
In order to illustrate the above definitions, we provide the following example. 
Suppose we define an alphanumeric alphabet, $\alpha=\{A-Z,a-z,0-9\}$, with a password, $p$ as a finite string over $\alpha$. 
We define three rules, $\xi_1(\alpha,aux_1)$, $\xi_2(\alpha,aux_2)$, and $\xi_3(\alpha,aux_3)$ where $aux_1$ is a bit string that specifies passwords consisting of English dictionary words with maximum length of 8 characters, $aux_2$ is a bit string that specifies passwords consisting of numeric characters with a maximum length of $4$, and $aux_3$ specifies any alphanumeric string of length $l$, where $1\leq{l}\leq{8}$.

Rule set $\xi_1=\{p_1,p_2,\dots\},\forall i\in{Z}^+$ is a subset of $\mathbb{X}(\alpha)$ that includes $1-$to$-8$ character dictionary words , e.g. \textit{hello}, \textit{Goodbye}, etc. 
Rule set $\xi_2=\{{p_1}',{p_2}',\dots\},\forall i\in{Z}^+$ is a subset of $\mathbb{X}(\alpha)$ that includes $1-$to$-4$ character strings of digits, e.g. \textit{0011}, \textit{555}, etc. 
Rule set $\xi_3=\{{p_1}'',{p_2}'',\dots\},\forall i\in{Z}^+$ is a subset of $\mathbb{X}(\alpha)$ that includes $1-$to$-8$ character strings of letters or digits, e.g. \textit{a1b2c3}, \textit{zzz3}, etc.

A combination of the above three rules $\Xi_1=\{\xi_1,\xi_2,\xi_3\}$ is the union of rule outputs, $\cup_{i=1}^3 (\xi_i(\alpha_i,aux_i))\subseteq\mathbb{X}(\alpha)$ and includes passwords from all three rule sets above. 
A permutation of $\Xi_1=\{\xi_1,\xi_2,\xi_3\}$ gives a directed graph, $G_1$ in which the edges impose an ordering of the rules, e.g. $(\xi_2,\xi_1,\xi_3)$. 
The password, $p=password$, is generatable by rule combination, $\Xi_1$, because $p\in\cup_{i=1}^3 (\xi_i(\alpha_i,aux_i))$. 
Note that password, $p'=password1$, is not generatable by $\Xi_1$ as it is not an element of the union of rule sets $\xi_1,\xi_2,\xi_3$. 

An example parsing of $p''=psword1$ is $p|s|word1$. 
A parsing function, $\Gamma(\Xi_1,p'')$, conforms the rule combination, $\Xi_1$ on $p''$ to produce a list of parsings $[p'']$. 
Example parsings from $[p'']$ include $ps|word|1$, $ps|wo|rd1$, etc. 

A protection function $F_\alpha$ can be any one-way function that inputs a string (password) and outputs a bit string from which it is difficult to recover the original input string. 
Example protection functions include common hash functions such as MD5 or SHA-1. 
Finally, an adversary is represented as a non-deterministic password guessing algorithm, e.g. a guessing algorithm which tries dictionary words up to 8 characters in length at random and, upon exhausting all such words, tries random numbers between $1$ and $1000$.

\section{Password Complexity and Strength}
\label{sec:PasswordStrengthvsComplexity}

In this section we formally define notions of password complexity and password strength, and we then discuss their key differences and implications. 

\subsection{Defining Password Complexity} \label{sec:complex-def}
Recall that a password $p$ is just a finite string of characters that come from some particular finite alphabet $\alpha$. 
We define complexity of a given password over some alphabet in the context of a set of rules.

\begin{defn}[Complexity]
\label{def:complexity}
Complexity of a password $p$, over some alphabet $\alpha$, with respect to a finite set of rules $\Xi = \{\xi_1, \xi_2, \dots, \xi_k \}$  
is defined as the size of the smallest subset of $\mathbb{X}(\alpha)$ containing $p$ that can be generated with any combination of rules in $\Xi$ over $\alpha$, with any auxiliary inputs.
If no combination of rules in $\Xi$ can generate a set that contains $p$, then $p$'s complexity is the cardinality of $\mathbb{X}(\alpha)$.
\end{defn}

Notice that this definition requires specification of an alphabet and rules. 
This is done to capture the question of how hard it may be for an attacker to guess a password with its own set of rules and dictionaries, knowing the password policy requirements used to generate that password.  
Previous entropy-based password complexity measures were not adequate because they did not provide the means for specifying the appropriate password search space based on precisely this kind of information, i.e., rules and dictionaries that attackers may be using. 
Def.~\ref{def:complexity} also captures the scenario when the attacker has no information about password policies and cannot generate the password with any of its rules and dictionaries, in which case this password may be as good as a random finite string.

\subsection{Defining Password Strength} \label{sec:strength-def}
We define password strength with respect to the following security experiment $\text{Exp}^{\text{\tiny{FAT}-sec}}$ involving an adversary $A$. 

\begin{defn}[FAT-Security Experiment]
\label{def:fat-experiment}
The inputs to FAT-experiment are an alphabet $\alpha$, a protection function $F$ associated with $\alpha$, the description of adversary A, a password $p$ over $\alpha$, and a time period $T$. 
Description of an adversary $A$ includes all of its computational resources, its rules, and any auxiliary information. 
$A$ takes as input $\alpha$, $F$, $p$, $T$, and any additional random input. 
The security experiment ends either after $A$ outputs a finite string $p'$ or the time after the experiment starts exceeds $T$, which ever comes first. 
$\text{Exp}^{\text{\tiny{FAT}-sec}}$ ($\alpha$, $F$, $A$, $T$, $p$) returns one if, within time $T$ after its start $A$ outputs a finite string $p'$ such that $F(p')$ = $F(p)$.  
$\text{Exp}^{\text{\tiny{FAT}-sec}}$ ($\alpha$, $F$, $A$, $T$, $p$) returns zero otherwise.
\end{defn}

We now define our security definition with respect to the FAT-experiment we just described.
\begin{defn}[Password FAT-Strength]
\label{def:fat-security}
We say that a password $p$ over an alphabet $\alpha$ is FAT-secure 
if over all random inputs to $A$, $\text{Exp}^{\text{\tiny{FAT}-sec}}$($\alpha$, F, A, T, $p$) returns 0 in expectation. 
\end{defn}

Note that by providing an adversary with any auxiliary information this definition captures an attacker's potential knowledge of the policies under which the input password was selected as well as how it could steal a multitude of additional protected passwords. 
The key aspect of this definition is that it requires us to consider attacker's capabilities as well as the protection function used to store the password. 
This is something that has not been captured by any previous password strength or complexity measures, with the exception of \emph{passfault}\cite{passfault}.

\emph{passfault} is Time-To-Crack (\textsc{ttc}) estimator that takes into account attacker's capabilities including rules and password protection function. 
However, it is dependent on a fixed set of rules and a fixed methodology for parsing passwords. 
Also, it does not capture attacker's order of rule application, nor does it take into account advances in technology and any additional auxiliary information. 
We describe how to address these shortcomings later in the paper.

Note that password complexity does not take into account how the password is stored, nor attackers' capabilities.  
Thus, it cannot intrinsically provide an estimate of how long it may take anyone to crack a password.
However, it is a good indicator of how difficult it may be to guess a password (i.e., how close it is to a random string) when information about protection function or attacker's capabilities is not clear 
(the most typical scenario when users are asked to select a password for a particular website).

Password strength on the other hand, does take such details into account, and, thus, it it is more complete.  
However, password strength may be very difficult to estimate realistically because it may be impossible to accurately capture changes of such parameters as technological advances 
and any additional auxiliary information available to the adversary with time. 
Any such extra information can in principle be encapsulated as auxiliary information within the description of the adversary, and we propose how this can be done later in the paper.

\begin{figure*}[htbp]
	\centering
    \includegraphics[width=6.0in]{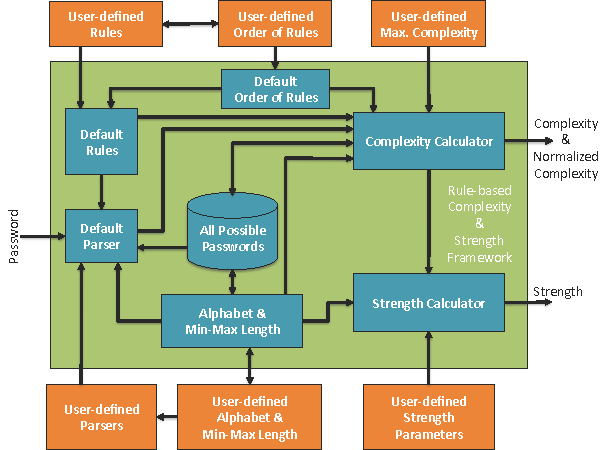}
    \caption[Our rule-based framework]{Our rule-based framework takes a password $p$ and other $aux$ as inputs and then outputs $p$'s complexity and strength. Orange is input to our framework and blue represents internal functions.}
    \label{fig:ruleBasedEngine}
\end{figure*}

\subsection{Evaluating Password Strength Estimators} \label{sec:strength-estimators-evaluation}
To truly evaluate a password-strength estimator one must compare its estimates with respect to real password-cracking attacks.
However, this may be infeasible in practice due to lack of proper equipment and time. 

Although, for the purposes of empirical evaluation, well-known password-cracking tools such 
as John the Ripper (JtR) and Hashcat~\cite{aumasson:2014} can in principle be run
on commodity hardware~\cite{yang:2015}, their performance will not do justice to attackers' capabilities in the wild.

Even when appropriate password-cracking hardware~\cite{rankin:2012} is available, there may not be enough time. 
For example, it is impractical to wait for a year to see if a password may really require that long to be recovered.
In that time attackers' capabilities are likely to improve, and the password-strength estimator under test  is likely to undergo significant updates.
To address the timing issue, one could focus on passwords that cannot be broken within a smaller, more practical amount of time $T$ (e.g., 4 weeks). 
In this context we consider the following two main criteria for evaluating password-strength estimators:
\begin{itemize}
	\item{\textbf{Reliability} The estimator does not create a false sense security in the sense that it marks weak passwords as strong.}
	\item{\textbf{Inclusion} The estimator does not reduce the space of passwords considered to be strong by marking strong passwords as weak.}
\end{itemize}

Intuitively, we do not want an estimator to overestimate or underestimate password strength.
We now present definitions that make up a framework for evaluating password-strength estimators.

\begin{defn}[FAT-Strength Estimator]
\label{def:fat-estimator}
A password FAT-strength estimator $E$ is a function that takes as input alphabet $\alpha$, a protection function $F$ associated with $\alpha$, 
the description of adversary A, a password $p$ over $\alpha$, a time period $T$, and any additional information \textsc{aux}, and outputs 
\begin{itemize}
	\item{1, in which case we say that $E$ \emph{marks} $p$ as FAT-secure, or}
	\item{0, in which case we say that $E$ \emph{marks} $p$ as not FAT-secure.}
\end{itemize} 
\end{defn}

\begin{defn}[Password FAT-Strength Estimator Reliability]
\label{def:fat-reliability}
We say that a password FAT-strength estimator $E$ over an alphabet $\alpha$ is \textbf{reliable} over a test set $P \subseteq \mathbb{X}(\alpha)$, 
if the fraction of $P$ that $E$ marks as FAT-secure that are not FAT-secure is negligible in $|P|$.
\end{defn}

\begin{defn}[Password FAT-Strength Estimator Inclusion]
\label{def:fat-inclusion}
We say that a password FAT-strength estimator $E$ over an alphabet $\alpha$ is \textbf{inclusive} over a test set $P \subseteq \mathbb{X}(\alpha)$, 
if the fraction of $P$ that $E$ marks as not FAT-secure that are FAT-secure is negligible in $|P|$.
\end{defn}

\begin{defn}[Password FAT-Strength Estimator Accuracy]
\label{def:fat-accuracy}
We say that a password FAT-strength estimator $E$ over an alphabet $\alpha$ is \textbf{accurate} over a test set $P \subseteq \mathbb{X}(\alpha)$, 
if it is both reliable and inclusive with respect to $P$.
\end{defn}

\subsection{Discussions:}
We propose a general framework using all possible types of password cracking attacks to calculate a password's complexity and strength. 
In this section, we describe a couple of examples how our framework can be used to capture different types of password cracking attacks. 

\subsubsection{Probabilistic context-free grammar:} 
In this rule, an attacker uses a large set of passwords from major password breaches to train his password generation model~\cite{weir:2009}. 
The attacker then uses the trained model to create a rule set that is used to generate a context-free grammar strings to crack a password. 
Our framework imitates the attacker's password cracking strategy during the calculation of the password's complexity and strength.

\subsubsection{Password cracking informed by online presence:}  
In this rule, an attacker scrapes the social network websites (e.g., Facebook, LinkedIn, etc.) of a user to extract possible phrases from structured/unstructured text, pictures, videos etc. which can be used to derive a password. 
These phrases can be combined with possible rule sets (e.g., word list) to create a combination of rules. 
The attacker probably knows the alphabet associated with the password that the attacker is trying to crack. 
If he does not know the alphabet, he can try different alphabets or all \textsc{ascii} characters. 
Finally, the attacker can explore all possible passwords by using different dictionaries (e.g., if the user's Facebook page has posts in English and French, the attacker can use these both dictionaries.)
Our framework mimics the attacker's password cracking strategy to calculate the password's complexity and strength.

\section{Our Rule-based Approach}
%\section{Rule-based Password Complexity}
\label{sec:ruleBasedPasswordComplexity}
In this section, we present the details of our rule-based approach which uses the combinations of upper bound, lower bound, chain rule, and order-aware chain rule to calculate the complexity and strength  of a password, $p$. 
Unlike other schemes~\cite{carnavalet:2014}~\cite{castelluccia:2012}~\cite{bonneau:2012}, our framework provides more complete sense of security for a user while creating a password in $\mathbb{X}(\alpha)$. 

\subsection{Complexity and Strength Calculation Framework}
\label{subsec:framework}

General architecture of our rule-based password complexity and strength framework is shown in Fig.~\ref{fig:ruleBasedEngine}. 
It takes a password $p$ as an input. 
A user can also provide a subset or entire $\Xi$, $\alpha$, the minimum and maximum allowable length of $p$, strength parameters (e.g., password storage and adversarial capabilities), and order of rules. 
It outputs $\eta$, $FAT-$strength estimate (i.e., $0$ or $1$), and normalized complexity of $p$ in relation to lower and upper bounds of complexity. 
Note that all inputs (orange boxes in the Fig.~\ref{fig:ruleBasedEngine}) are optional. 
The \textit{default parser} extracts parsings of $p$ and it has three options: (i) using the default algorithms $p \rightarrow [P]$, (ii) using the user-defined algorithms $p\rightarrow [p]\subseteq [P]$, and (iii) locating $p$ in precalculated set of all possible passwords based on the input rules.  
The output of the default parser is either a list of parsed results of $p$ or a corresponding point of $p$ in $\mathbb{X}(\alpha)$. 
The \textit{complexity calculator} takes the default parser's output as an input. 
It may either calculate the complexity of $p$ based on the provided parsings or use pre-calculated $\Xi$ and then map $p$ into the minimum search space (i.e., complexity.) 
The complexity calculator outputs complexity and normalized complexity of $p$. 
The $FAT-$\textit{strength calculator} output is binary. 
$H_1$ means $FAT-$ strength calculator outputs $1$  and $H_0$ means $FAT-$ strength calculator outputs $0$.

\subsection{Understanding Password Complexity} 
In this section, we provide details of our framework and mathematical proofs to support our approach and also show that our rule-based complexity and strength calculation yields a sense of security for a given password $p$. 

The password entropy~\cite{principe:2010} is commonly used to indicate a measure of protection provided by $p$ and increases with the number of characters. 
The size of the all possible passwords with alphabet $\alpha$, $\mathbb{X}(\alpha)$, identifies the complexity for randomly generated passwords. 
The larger the password search space, the more difficult password is to crack by brute-force attack. 

Let us provide a couple of numerical examples to highlight some details of $\alpha$, $\mathbb{X}(\alpha)$, and adversaries perspective on $|\mathbb{X}(\alpha)|$. 
Assume that we want to create an eight-character password and the alphabet has only lower case English letters (i.e., $\xi: \alpha=\{\text{lower-case English letters}\}\times aux=\{\text{no more than eigth letters}\} \rightarrow 2^{\mathbb{X}(\alpha)}$). 
There are more than 200 billion possible ways to create a password $p$, $|\mathbb{X}{(\alpha : \text{lower case english letters})}|=26^8$. 
If an attacker knows the allowed length of $p$ and that the password only uses lower-case letters, at a rate of thousands to trillions password attempts per second, it could take $2\times10^8$  to $0.2$ seconds, respectively to crack the password by using brute force. 
Note that the rate of attempt to guess a password widely varies because of adversaries hardware capabilities. 
It might be primitive software on out-dated hardware for an everyday attacker or a dedicated infrastructure with state-of-art software algorithms for a state-sponsored cyber team. 
If we augment the alphabet with upper case letters, there are two orders of magnitude more possible ways than lower-case only passwords. 
%$256$ times more possible ways than lower-case only passwords. 
When $\alpha$ has lower-case letters with eight-character passwords, there are $26$ times more possible ways than lower-case letters with nine-character passwords (i.e., $\alpha=\{\text{lower-case English letters}\}\times aux=\{\text{no more than nine letters}\} \rightarrow 2^{\mathbb{X}(\alpha)}$). 
These examples highlight how the size of all possible passwords, $|\mathbb{X}(\alpha)|$, changes with respect to the number of allowed length and alphabet. 
Fig.~\ref{fig:searchSpaceSize} shows various combinations of lower/upper-case letters (i.e., $\alpha=\{\text{lower- and upper-case English letters}\}\times aux=\{\text{number of letters is }\ell \in [6,15]\}\rightarrow 2^{\mathbb{X}(\alpha)}$) and password length versus the number of different ways to create a password in log scale.  

\begin{figure}[htbp]
    \includegraphics[width=3.75in]{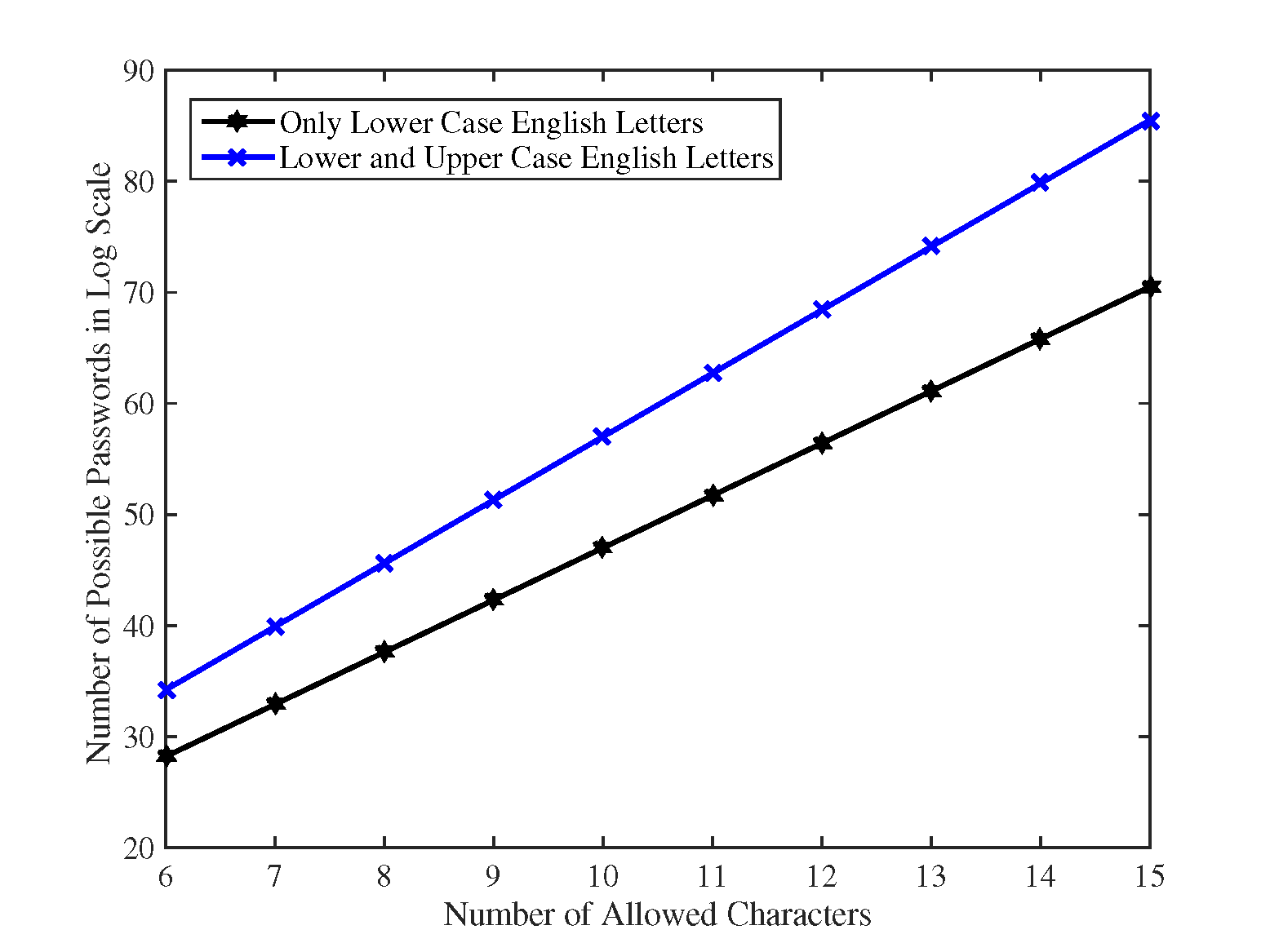}
   \caption[Complexity of $\mathbb{X}(\alpha)$ increases with the $|\alpha|$ and allowed password length]{Complexity of $\mathbb{X}(\alpha)$ increases with $|\alpha|$ and allowed password length.}
    \label{fig:searchSpaceSize}
\end{figure}
%Basically, password complexity depends on the password-design process and maximum complexity is achieved when each password character is independently drawn from uniformly distributed alphabet. 

%Let us first define a rule that we will use to define password patterns later in this section. 
%\begin{defn}
%\label{def:rule}
%A rule, denoted by $\xi$, is a function that takes the alphabet, which is defined in Section~\ref{sec:PasswordStrengthvsComplexity}, as an input and then generates a unique subset of $\mathcal{X}$ based on a (user) defined requirements. 
%\end{defn}
\subsection{More Information Leads to Higher Predictability}
%$\xi$ is a subset of $\mathbb{X}(\alpha)$}
\label{subsec:rule}
In this section, we establish mathematical ground to construct our rule-based complexity and strength calculation. 
Our goal is to prove that information accumulation about a user's password increases the predictability of the password by an adversary as we show in the following theorem. 
%First, let us show that combining two rules results in a new unique rule. 

\begingroup
\def\thetheorem{\ref{thm:informationGain}}
\begin{thm}[Information Gain]
\label{thm:informationGain}
When an adversary $A$ gains more information about the possible password space $\mathbb{X}(\alpha)$, the complexity of a password decreases. 
\end{thm}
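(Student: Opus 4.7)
The plan is to interpret ``gaining information'' as a monotone enrichment of the adversary's rule set and its auxiliary inputs. Specifically, if $A$'s initial knowledge is characterized by a finite rule set $\Xi=\{\xi_1,\ldots,\xi_k\}$ with auxiliary inputs $aux_1,\ldots,aux_k$, then ``more information'' corresponds to an extended collection $\Xi'\supseteq\Xi$ (possibly with refined auxiliary inputs) such that every subset of $\mathbb{X}(\alpha)$ generatable by some combination of rules in $\Xi$ is also generatable by some combination of rules in $\Xi'$. With this reading, the claim reduces to a monotonicity property of the minimum operator in Definition~\ref{def:complexity}.

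Next I would unfold Definition~\ref{def:complexity} by writing
\[
C_\Xi(p)=\min\bigl\{\,|S|:S\subseteq\mathbb{X}(\alpha),\ p\in S,\ S\text{ generatable by a combination of rules in }\Xi\,\bigr\},
\]
with the convention $C_\Xi(p)=|\mathbb{X}(\alpha)|$ when no admissible $S$ exists. The key observation is that the feasible region of this minimization is monotone in the rule family: every candidate $S$ for $\Xi$ remains a candidate for $\Xi'$, because any combination drawn from $\Xi$ (in the sense of Definition~\ref{def:unionofRules}) is a special case of a combination drawn from $\Xi'$. Therefore the minimum over the enlarged feasible set can only weakly decrease, giving $C_{\Xi'}(p)\leq C_\Xi(p)$, with strict inequality whenever $\Xi'$ exposes a smaller generatable subset containing $p$.

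The boundary case I would handle separately: if $p$ is not generatable by any combination in $\Xi$, then $C_\Xi(p)$ already takes its maximal value $|\mathbb{X}(\alpha)|$, so the inequality holds trivially, and it becomes strict the moment $\Xi'$ acquires any rule combination whose output contains $p$. Conversely, if $p$ is generatable under $\Xi$, then one can exhibit strict decrease by adding to $\Xi$ a single new rule whose output is a proper subset of the previous minimizer containing $p$ (for example, the singleton $\{p\}$ itself modeled by an appropriately restrictive $aux$).

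The main obstacle, and the only conceptually subtle part, is justifying that ``more information'' really corresponds to a monotone extension of the rule family rather than a replacement of prior beliefs by new ones. I would argue that any such refinement can always be encoded as adjoining a new rule (or tightening an existing $aux$) whose output subset is strictly contained in the old one; under the combination operator of Definition~\ref{def:unionofRules}, the enlarged family still contains the old one as a sub-family, so monotonicity of the feasible region is preserved. Once this modeling convention is made explicit, the theorem follows immediately from the two-line minimization argument above.
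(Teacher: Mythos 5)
Your argument is correct, but it takes a genuinely different route from the paper's. You work directly with the minimization in Definition~\ref{def:complexity}: formalizing ``more information'' as a monotone enlargement of the rule family $\Xi\subseteq\Xi'$, you observe that every generatable subset containing $p$ under $\Xi$ remains feasible under $\Xi'$, so the minimum over the enlarged feasible set can only weakly decrease; the boundary case where $p$ is not generatable is handled by the $|\mathbb{X}(\alpha)|$ convention. The paper instead formalizes information gain as \emph{pruning} the global search space: it first partitions $\mathbb{X}(\alpha)$ into disjoint pieces (Lemma~\ref{lemma:partitioned}), models prior knowledge as excluding one or more pieces $\xi_k$ so that the attacker searches $\mathbb{X}(\alpha')\subsetneq\mathbb{X}(\alpha)$ (Lemma~\ref{lemma:aprior}), and then argues via the chain $|\mathbb{X}(\alpha'')|<|\mathbb{X}(\alpha')|<|\mathbb{X}(\alpha)|$ and the corresponding increase in guessing probability $1/|\mathbb{X}(\cdot)|$. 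Your version buys two things: it is more elementary (a two-line monotonicity-of-minimum argument rather than three supporting lemmas), and it is more faithful to Definition~\ref{def:complexity}, which measures complexity by the smallest generatable subset containing $p$ rather than by the cardinality of the pruned global space that the paper's proof actually compares. The paper's version buys reuse of its partition machinery elsewhere and makes the link to guessing probability explicit. Both proofs share the same modeling burden --- pinning down what ``gains more information'' means --- and you are right to flag this as the only conceptually delicate step; your encoding (adjoining rules whose outputs are contained in, or refine, the old ones, with the combination operator of Definition~\ref{def:unionofRules} preserving the old family as a sub-family) resolves it at least as cleanly as the paper's set-subtraction encoding.
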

\addtocounter{thm}{-1}
\endgroup

%\begin{proof}
To prove this theorem, we need to prove the following lemmas. 
%As shown in Lemma~\ref{lemma:aprior}, an adversary has an advantage when there is prior knowledge. 
%Assume that $\mathbb{X'}(\alpha')$ and $\mathbb{X''}(\alpha'')$ represent subsets of $\mathbb{X}(\alpha)$ such that $\mathbb{X''}(\alpha'')= \cup_{i=1, i\ne k}^{\infty}(c_i)$ and $\mathbb{X'}(\alpha)= \cup_{i=1, i\ne k,l}^{\infty}(c_i)$, respectively. 
%Note that $\alpha''\subseteq\alpha'\subseteq\alpha$. 
%Regardless of the size of the subsets $c_k$ and $c_l$ and all three alphabets ($\alpha,\alpha',\alpha''$), $|\mathbb{X''}(\alpha'')|<|\mathbb{X'}(\alpha')|<|\mathbb{X}(\alpha)| \Rightarrow p_{x\in\mathbb{X''}(\alpha'')}>p_{x\in\mathbb{X'}(\alpha')}>p_{x\in\mathbb{X}(\alpha)}$ as proven in Lemma~\ref{lemma:aprior}. 
%Thus, we conclude that more prior information means smaller search space and the guessing probability of a password increases.  
%\end{proof}

As given in Definitions~\ref{def:rule}~and~\ref{def:ruleset}, a rule set $\xi_i=\{x_{i1},x_{i2},\dots\}$, where $\forall i\in \mathbb{Z}^+$, is a subset of $\mathbb{X}(\alpha)$. 
The following lemma shows that combination of any number of rules results in a rule which is a subset of $\mathbb{X}(\alpha)$. 

\begin{lemma}[Union]
\label{lemma:rule}
Any combinations of $n$ rules $\xi_1,\dots,\xi_n$ results in a rule $\xi_k$ whose corresponding rule set $\xi_k \text{ is } \xi_1\cup\xi_2\cup\dots\cup\xi_n \subseteq \mathbb{X}(\alpha), \forall n \in [2,\infty)$. 
%
%A combination of $n$ rule sets, $\xi_1,\dots,\xi_n$ results in a rule set $\xi_k$ such that $\xi_1\cup\xi_2\cup\dots\cup\xi_n = \xi_k \Rightarrow \xi_k \subseteq \mathbb{X}(\alpha), \forall n \in [2,\infty)$. 
\end{lemma}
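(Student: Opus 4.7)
The plan is to prove this by induction on $n$, leveraging Definition~\ref{def:rule} (Rule), Definition~\ref{def:ruleset} (Rule Set), and Definition~\ref{def:unionofRules} (Combination of Rules) directly. The claim really has two parts: (i) the union $\xi_1 \cup \dots \cup \xi_n$ is contained in $\mathbb{X}(\alpha)$, and (ii) this union itself can be viewed as the rule set of some single rule $\xi_k$.

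First I would handle the base case $n=2$. Part~(i) is immediate: by Definition~\ref{def:rule} each $\xi_i(\alpha_i, aux_i) \subseteq \mathbb{X}(\alpha_i)$, and since $\alpha_i \subseteq \alpha$ we have $\mathbb{X}(\alpha_i) \subseteq \mathbb{X}(\alpha)$, so the union of the two output sets lies in $\mathbb{X}(\alpha)$, which is exactly what Definition~\ref{def:unionofRules} asserts. For part~(ii) I would explicitly construct $\xi_k$ on alphabet $\alpha_k = \alpha_1 \cup \alpha_2 \subseteq \alpha$ with auxiliary input $aux_k = aux_1 \wedge aux_2$ (following the conjunctive convention stated in the paragraph after Definition~\ref{def:rule}), defined pointwise so that $\xi_k(\alpha_k, aux_k) = \xi_1(\alpha_1, aux_1) \cup \xi_2(\alpha_2, aux_2)$. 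Because the right-hand side is a subset of $\mathbb{X}(\alpha)$, this is a well-typed function $\mathbb{X}(\alpha) \times \{0,1\}^* \to 2^{\mathbb{X}(\alpha)}$ and hence a bona fide rule in the sense of Definition~\ref{def:rule}.

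For the inductive step, I would assume the lemma holds for some $n \ge 2$, so that $\xi_1 \cup \dots \cup \xi_n$ is the rule set of a single rule $\xi_k^{(n)}$ with associated alphabet $\alpha_k^{(n)} \subseteq \alpha$ and auxiliary input $aux_k^{(n)} = aux_1 \wedge \dots \wedge aux_n$. Then the combination of $\xi_k^{(n)}$ with $\xi_{n+1}$ is, by the base case, the rule set of a single rule $\xi_k^{(n+1)}$ whose output equals $\xi_k^{(n)} \cup \xi_{n+1} = \xi_1 \cup \dots \cup \xi_{n+1} \subseteq \mathbb{X}(\alpha)$, closing the induction.

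I do not expect any deep obstacle here; the argument is almost bookkeeping on top of the definitions. The only subtle point worth spelling out carefully is ensuring that the constructed $\xi_k$ genuinely fits Definition~\ref{def:rule} (correct domain and codomain) and that the conjunctive interpretation of $aux_k$ is compatible with encoding a union of outputs --- in particular, $aux_k$ should be read as a specification that any string satisfying at least one of the conjoined component specifications qualifies, which is consistent with the paper's convention that $aux$ is a logical formula interpreted over the resulting string set. Making this interpretive choice explicit at the start of the proof will prevent any apparent tension between ``conjunction of requirements'' and ``union of outputs.''
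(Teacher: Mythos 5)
Your proposal is correct and follows essentially the same route as the paper's proof: induction on $n$, with a base case that combines two rules over $\alpha_1\cup\alpha_2$ with auxiliary input $aux_1\wedge aux_2$ and observes the union of outputs stays inside $\mathbb{X}(\alpha)$. Your inductive step is in fact slightly cleaner (you reduce $n+1$ to the two-rule base case applied to $\xi_k^{(n)}$ and $\xi_{n+1}$, whereas the paper re-runs the whole argument on $q+1$ sets), and your explicit remark about reconciling the conjunctive $aux_k$ convention with the union of outputs addresses a point the paper leaves implicit.
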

\begin{proof}
%As given in Definition~\ref{def:rule}, $\xi$ is a function such that it takes a finite alphabet and an $aux$ and then generates a subset of $\mathbb{X}(\alpha)$. 
We prove this lemma by induction,
\begin{equation}
\label{eqn:union}
\xi_1\cup\xi_2\cup\dots\cup\xi_n = \xi_k \Rightarrow \xi_k \subseteq \mathbb{X}(\alpha), \forall n \in [2,\infty)
\end{equation}

\emph{Base case:} Let $|n|=2$  such that $\xi_i \cup \xi_j = \xi_k \subseteq \mathbb{X}(\alpha)$. 
Assume that we have two rule sets $\xi_i$ and $\xi_j$ and they operate on the alphabets $\alpha_i, \alpha_j\subseteq\alpha$ and $aux_i, aux_j$, respectively. 
$\xi_i(\alpha_i,aux_i)$ and $\xi_j(\alpha_j,aux_j)$ generate $\{x_{i1},x_{i2},\dots\}$ and $\{x_{j1},x_{j2},\dots\}$, respectively. 
$\xi_i(\alpha_i,aux_i),\xi_j(\alpha_h,aux_j)\subseteq\mathbb{X}(\alpha) \Rightarrow \{x_{i1},x_{i2},\dots\}\cup\{x_{j1},x_{j2},\dots\}  \subseteq \{x_{i1},x_{i2},\dots, x_{j1},x_{j2},\dots\}$ and the cardinality of the union rule set $|\{x_{i1}, x_{i2}, \dots, x_{j1}, x_{j2},\dots\}| \le |\{x_{i1},x_{i2},\dots\}| + |\{x_{j1},x_{j2},\dots\}|$. 
Note that $|\xi_k|$ is less than and equal to $|\xi_i|+|\xi_j|=|\mathbb{X}(\alpha_i)\times aux_i| + |\mathbb{X}(\alpha_j)\times aux_j|$ since any given rule set is not necessarily proper subset of $\mathbb{X}(\alpha).$

$\xi_k$ is the combination of these two rule sets such that $\xi_i(\alpha_i,aux_i)\cup\xi_j(\alpha_j,aux_j)=\xi_k:\mathbb{X}(\alpha_i \cup \alpha_j \subseteq \alpha)\times (aux_i \wedge aux_j) \Rightarrow \xi_k= \{x_{i1},\dots, x_{ik}, x_{j1},\dots, x_{jl} \} \subseteq \mathbb{X}(\alpha)$. 
We can claim that the union of two rule sets results in a rule set in a profinite set $\mathbb{X}(\alpha)$. 

\emph{Inductive hypothesis:} 
Let $\forall n \in [2,\infty)$ be given and suppose Eqn.~\ref{eqn:union} is true for $n = q$. 
Suppose that $q$ rule sets $\xi_1,\dots,\xi_q$ operate on the alphabets $\alpha_1, \dots,\alpha_q\subseteq\alpha$ and $aux_1, \dots, aux_q$, respectively. 
$\xi_1(\alpha_1,aux_1)$, $\xi_2(\alpha_2,aux_2)$, $\dots$, and $\xi_q(\alpha_q,aux_q)$ generate $\{x_{11},x_{12},\dots\}$, $\dots$, $\{x_{q1},x_{q2},\dots\}$, respectively. 
$\xi_i(\alpha_1,aux_1),\dots,\xi_q(\alpha_q,aux_q)\subseteq\mathbb{X}(\alpha) \Rightarrow \{x_{11},x_{12}\dots\}\cup\{x_{21},x_{22},\dots\}\cup\dots\cup\{x_{q1},x_{q2},\dots\}  \subseteq \{x_{11},x_{12},\dots, x_{q1},x_{q2},\dots\}$ and the cardinality of the union rule set $|\{x_{11}, x_{12},\dots,x_{q1},x_{q2},\dots\}| \le |\{x_{11},x_{12},\dots\}| + |\{x_{21},x_{22},\dots\}|+\dots+|x_{q1},x_{q2},\dots|$. 
Note that $|\xi_k|$ is less than and equal to $|\xi_1|+|\xi_2|+\dots+|\xi_q|$ since any given rule set is not necessarily proper subset of $\mathbb{X}(\alpha).$

\emph{Induction step:} 
Let use the assumptions from Induction step - 1 and show that the result holds for $n = (q+1)$.
$\xi_1(\alpha_1,aux_1)$, $\xi_2(\alpha_2,aux_2)$, $\dots$, $\xi_{q}(\alpha_{q},aux_{q})$, and $\xi_{q+1}(\alpha_{q+1},aux_{q+1})$ generate $\{x_{11},x_{12},\dots\}$, $\dots$, $\{x_{q1},x_{q2},\dots\}$, $\{x_{(q+1)1},x_{(q+1)2},\dots\}$, respectively. 
$\xi_i(\alpha_1,aux_1)$,$\dots$,$\xi_q(\alpha_q,aux_q)$, $\xi_{(q+1)}(\alpha_{(q+1)},aux_{(q+1)})$ $\subseteq\mathbb{X}(\alpha)$ $\Rightarrow \{x_{11},x_{12}\dots\}$ $\cup\{x_{21},x_{22},\dots\}$ $\cup\dots\cup\{x_{q1},x_{q2},\dots\}$ $\cup\{x_{(q+1)1},x_{(q+1)2},\dots\}$ $\subseteq \{x_{11},x_{12},\dots, x_{q1},x_{q2},\dots,x_{(q+1)1},x_{(q+1)2},\dots\}$ and the cardinality of the union rule set $|\{x_{11}, x_{12},\dots, x_{q1},x_{q2},\dots,x_{(q+1)1},x_{(q+1)2},\dots\}| \le |\{x_{11},x_{12},\dots\}| + |\{x_{21},x_{22},\dots\}|+\dots+|x_{q1},x_{q2},\dots|+|x_{(q+1)1},x_{(q+1)2},\dots|$. 
Note that $|\xi_k|$ is less than and equal to $|\xi_1|+|\xi_2|+\dots+|\xi_q|+|\xi_{(q+1)}|$ since any given rule set is not necessarily proper subset of $\mathbb{X}(\alpha).$

\emph{Conclusion:} By the principle of induction, Eqn.~\ref{eqn:union} is true $\forall n \in [2,\infty).$
\end{proof}

Let us give an example for Lemma~\ref{lemma:rule} to give insight about the meaning of union of two rules. 
Assume that we have two rules such that $\xi_1$ represents the rule of dictionary words and $\xi_2$ is the numbers from zero to nine. 
Passwords only composed of dictionary words are in $\xi_1$ and passwords with numbers are in $\xi_2$. 
When we combine these rules, $\xi_1\cup\xi_2$, the resulting rule $\xi_3$ represents passwords with dictionary words and numbers in $\mathbb{X}(\alpha)$. 

Now, let us prove that $\mathbb{X}(\alpha)$ can be partitioned. 

\begin{lemma}[Countably Infinite]
\label{lemma:partitioned}
Profinite password search space $\mathbb{X}(\alpha)$ can be partitioned into a countably infinite set $\xi_1,\xi_2,\dots$ such that $\mathbb{X}(\alpha) = \cup_{j=1}^\infty \xi_j$ where $\xi_j$ represents a rule or a combination of rules. 
\end{lemma}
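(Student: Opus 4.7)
The plan is to exhibit an explicit countably infinite partition of $\mathbb{X}(\alpha)$ in which every block is itself a rule set (in the sense of Definitions~\ref{def:rule} and~\ref{def:ruleset}), and then appeal to Lemma~\ref{lemma:rule} to confirm that each block is a legal rule (or combination of rules) over $\alpha$. The cleanest choice is to partition by length: for each $n \in \mathbb{Z}^+ \cup \{0\}$, define $\xi_n$ to be the rule whose auxiliary input $aux_n$ is the (finite) bit-string encoding of the predicate ``$|\sigma| = n$''. Then $\xi_n(\alpha, aux_n) = \{\sigma \in \mathbb{X}(\alpha) : |\sigma| = n\}$, which is a well-defined subset of $\mathbb{X}(\alpha)$ and hence a valid rule set.

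First I would verify the three partition conditions. Disjointness is immediate since a finite string has a unique length, so $\xi_m \cap \xi_n = \emptyset$ whenever $m \neq n$. Coverage is also immediate: every $\sigma \in \mathbb{X}(\alpha)$ is a finite string over $\alpha$ and therefore lies in $\xi_{|\sigma|}$, so $\mathbb{X}(\alpha) \subseteq \bigcup_{n \geq 0} \xi_n$; the reverse inclusion follows from $\xi_n \subseteq \mathbb{X}(\alpha)$ by Definition~\ref{def:rule}. Countability of the index set is trivial. Thus $\mathbb{X}(\alpha) = \bigcup_{j=1}^{\infty} \xi_j$ as a disjoint union of countably many rule sets.

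Next I would address the ``rule or combination of rules'' clause. Each $\xi_n$ is already a single rule by construction, but to match the statement's flexibility, I would remark that, by Lemma~\ref{lemma:rule}, any finite union of rules is again a rule set inside $\mathbb{X}(\alpha)$, so regrouping finitely many $\xi_n$'s into one block is allowed without affecting the partition property. I would also note that the same argument goes through if one prefers the singleton partition $\xi_j = \{\sigma_j\}$, obtained from any fixed enumeration $\sigma_1, \sigma_2, \ldots$ of $\mathbb{X}(\alpha)$ (which exists because a finite alphabet generates only countably many finite strings); here each $aux_j$ encodes the predicate ``$\sigma = \sigma_j$''.

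The main obstacle, and the only nontrivial point, is ensuring that each partition block genuinely fits Definition~\ref{def:rule}, i.e., that the auxiliary string $aux_n$ can be taken to be a \emph{finite} bit string. This is why partitioning by length (or by any other computable finite predicate) is preferable to, say, partitioning by an uncountable family of constraints: the predicate ``$|\sigma| = n$'' has a finite binary encoding for every $n$, so $aux_n \in \{0,1\}^{*}$ as required. Once this is in place, the lemma follows directly from the partition verification above.
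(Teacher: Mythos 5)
Your proof is correct, but it takes a genuinely different route from the paper's. You construct an explicit partition of $\mathbb{X}(\alpha)$ from scratch --- grouping strings by exact length, with each block realized as a single rule whose finite auxiliary string encodes the predicate $|\sigma|=n$ --- and then verify disjointness, coverage, and countability directly. The paper instead starts from an \emph{arbitrary given} countable collection of rules $\xi_1,\xi_2,\dots$ and applies the standard disjointification device $b_1=\xi_1$, $b_n=\xi_n\setminus(\xi_1\cup\dots\cup\xi_{n-1})$, showing the $b_i$ are pairwise disjoint and that $\cup_i b_i=\cup_j\xi_j$. The two arguments buy different things. Yours is a clean, self-contained existence proof: coverage of all of $\mathbb{X}(\alpha)$ is immediate (every finite string has a length), and your attention to whether each $aux_n$ is a finite bit string is exactly the point that needs checking against Definition~\ref{def:rule}. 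The paper's version is more operationally motivated --- it tells you how to turn whatever rule collection an attacker actually has into a disjoint decomposition, which is what Lemma~\ref{lemma:aprior} and Theorem~\ref{thm:infoGain} later rely on --- but as written it never establishes that the given rules' union actually equals $\mathbb{X}(\alpha)$, so strictly as a proof of the stated lemma it has a coverage gap that your construction avoids. A natural synthesis would be to adjoin your length-based (or any exhaustive) rule family to the attacker's rules before disjointifying, which yields both coverage and the operational decomposition. Your closing remark that finitely many blocks may be regrouped via Lemma~\ref{lemma:rule} correctly handles the ``rule or combination of rules'' clause.
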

\begin{proof}
 $\mathbb{X}(\alpha)$ is defined as a set of all finite strings over an alphabet $\alpha$, 
 %$\mathbb{X}(\{0,1\})=\{0,1\}^{^*}$ 
(see Section~\ref{sec:prelim}.) 
Suppose that we have set of rules, we will now present a procedure to transform this set into a disjoint set of the rules. 
To show $\mathbb{X}(\alpha)=\cup_{j=1}^{\infty}\xi_j$, it is sufficient by producing one-to-one map $f: \cup_{i=1}^{\infty}b_i \rightarrow 2^{\mathbb{X}(\alpha)}$, where $\cup_{i=1}^{\infty}b_i$ is a pairwise disjoint countable (i.e., countably infinite) set and $\cup_{j=1}^{\infty}\xi_j = \cup_{i=1}^{\infty}b_i$ as follows:\\ 
\begin{center}
$b_1 = \xi_1$\\
$b_2 = \xi_2 - \xi_1$\\
$b_3 = \xi_3 - (\xi_1 \cup \xi_2)$\\
$.$\\
$.$\\
$.$\\
$b_n = \xi_n - (\xi_1 \cup \xi_2 \cup \dots \cup \xi_{n-1})$\\
$.$\\
$.$\\
$.$\\
\end{center}

To see $b_i$s are pairwise disjoint: let us consider $b_n$ and $b_m$, where $n < m$. 
If $x\in b_m, b_m = \xi_m - (\xi_1\cup \xi_2\cup\dots\cup \xi_{m-1})$. 
This implies that $x \notin b_n$, $b_n = \xi_n - (\xi_1 \cup \xi_2 \cup \dots \cup \xi_{n-1})$. 
Thus, $b_n \cap b_m = \varnothing$. 

Let us prove that $\cup_{i=1}^{\infty}b_i = \cup_{j=1}^{\infty}\xi_j$. 
Since $b_n \subseteq \xi_n$, we certainly have $\cup_{i=1}^{\infty}b_i \subseteq \cup_{j=1}^{\infty}\xi_j$. 
Conversely, if $x\in \cup_{j=1}^{\infty}\xi_j$, it means that $x$ is in at least one of $\xi_n$'s. 
Assume that $x\in \xi_k$, where $\xi_k$ is the smallest set that $x$ can be a member of. 
Then $x\in b_k$. 
We know that $x \notin (\xi_1\cup \dots \cup \xi_{k-1})$ so it can be discarded in the definition of $b_k$. 
$x\in\cup_{i=1}^\infty b_i$, thus,  $\cup_{i=1}^\infty b_i = \cup_{j=1}^\infty \xi_j$. 

Let us define a function $f$ such that $f:\cup_{i=1}^{\infty}b_i\rightarrow \mathbb{X}(\alpha)\Rightarrow$ if $x \in b_i$, then $f:\sigma(x)\in \mathbb{X}(\alpha)$ as given in Definition~\ref{defn:generatable}. 
For $\forall x\in\cup_{i=1}^{\infty}b_i$, there is exactly one $\sigma$ since $b_i$'s are piecewise disjoint as shown above. 
This means that there is no ambiguity in $f$'s definition. 
We can claim that $f:\sigma(x)$ is one-to-one. 
$\xi_1,\xi_2,\dots$ is countable since $\cup_{i=1}^{\infty}b_i = \cup_{i=1}^{\infty}\xi_i$. 
Thus, $\mathbb{X}(\alpha)$ can be partitioned into a countably infinite sets such that $\cup_{i=1}^{\infty}b_i=\cup_{j=1}^{\infty}\xi_j \rightarrow \mathbb{X}(\alpha)$. 
%is composed of subsets with a combination of a varying number of rules. 
%The intersection of any two rules is an empty set as emphasized in Definition~\ref{def:rule}, $\xi_i\cap\xi_j = \varnothing$. 
%Each subset $c$ is a rule or a combination of a various rules. 
%First assume that each subset is composed of a rule such that $c_1 = \{\xi_1\}$ and $c_k =\{\xi_k\}$. 
%Hence, $\mathbb{X}(\alpha) = \cup_{i=1}^r c_i = c_1\cup\ c_2\cup \dots \cup c_r = \xi_1 \cup \xi_2\cup \dots \xi_k$. 
%Each rule can only be assigned to one subset since rules are unique as explained in Definition~\ref{def:rule}. 
%Thus, $c_i \cap c_j = \varnothing, i\neq j$
%Assume that $\exists$ $\xi_i$ is a composed of rules, $i\in[1,N]$. 
%Each subset but $\xi_i$ is still unique because Definition~\ref{def:rule}. 
%As proven in Lemma~\ref{lemma:rule}, if a rule $\xi_i$ is a combination of other rules, it is still a unique entity. 
%Thus, we can conclude that $\mathbb{X}(\alpha) = \cup_{i=1}^r c_i$.  
\end{proof}

Let us define the \textit{advantage} of an adversary $A$ in guessing a password $p$ as the difference between probabilities of $A$ guessing $p$ with and without prior knowledge. 

\begin{lemma} [Prior Knowledge]
\label{lemma:aprior}
If a password $p$ has a parsing $[p]\subseteq [P]$, the complexity of $p$ decreases 
%probability of cracking the password, $p_x$, increases 
when an attacker $A$ has prior knowledge. 
\end{lemma}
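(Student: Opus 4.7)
The plan is to apply Definition~\ref{def:complexity} in two regimes — with and without prior knowledge — and show that moving from the former to the latter can only shrink the smallest rule-generated set that contains $p$. The hypothesis $[p]\subseteq[P]$ is the formal signal that the adversary's rule bank $\Xi$ is rich enough to at least parse $p$, so this is the hook that converts ``prior knowledge'' into an actual upper bound on the complexity.

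First I would dispatch the baseline. Suppose the adversary has no prior knowledge about how $p$ was constructed, so no combination of rules in its (possibly empty) repertoire generates a subset of $\mathbb{X}(\alpha)$ containing $p$. By the final clause of Definition~\ref{def:complexity}, the complexity of $p$ in this regime is exactly $|\mathbb{X}(\alpha)|$, because the adversary is forced to treat $p$ as an arbitrary element of the profinite set and no smaller bounding set is available.

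Next I would handle the prior-knowledge case. Given $[p]\subseteq[P]$, the definition of the parsing function $\Gamma$ supplies a rule combination $\Xi=\{\xi_1,\dots,\xi_k\}$ that conforms to $p$: each parsing in $[p]$ partitions $p$ into segments, each of which lies in some $\xi_i(\alpha_i,aux_i)$. By Definition~\ref{defn:generatable} this makes $p$ generatable by $\Xi$, so $p\in\bigcup_{i=1}^k \xi_i(\alpha_i,aux_i)$. Lemma~\ref{lemma:rule} then tells me that this union is itself a rule set contained in $\mathbb{X}(\alpha)$, hence of cardinality at most $|\mathbb{X}(\alpha)|$. Invoking Definition~\ref{def:complexity}, the complexity of $p$ under prior knowledge is bounded above by the cardinality of this union, and therefore by $|\mathbb{X}(\alpha)|$.

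The main obstacle is upgrading this ``$\le$'' to a strict ``$<$'', i.e., showing that information actually decreases rather than merely fails to increase the complexity. I would resolve this by observing that any nontrivial prior knowledge — a dictionary rule, or a bounded-length constraint of the kind encoded in the $aux$ strings of Section~\ref{sec:prelim-example} — produces a rule set that is a \emph{proper} subset of $\mathbb{X}(\alpha)$: the set $\mathbb{X}(\alpha)$ of all finite strings is countably infinite, whereas any concrete rule with a length bound over a finite alphabet produces a finite set. Thus the smallest generated set containing $p$ has strictly smaller cardinality than $|\mathbb{X}(\alpha)|$, yielding the desired strict decrease. The subtle point to flag in the final write-up is that a purely vacuous ``prior knowledge'' (a rule whose output coincides with $\mathbb{X}(\alpha)$) would not decrease the complexity, so the lemma implicitly assumes the parsing comes from rules that carry some genuine restriction.
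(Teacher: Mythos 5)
Your proof is essentially sound but follows a genuinely different route from the paper's. The paper models prior knowledge as the exclusion of at least one candidate from the search space: it writes $\mathbb{X}(\alpha)\setminus\mathbb{X}(\alpha')=\xi_k$ with $|\xi_k|\ge 1$, concludes $|\mathbb{X}(\alpha')|<|\mathbb{X}(\alpha)|$, and then compares the guessing probabilities $\frac{1}{|\mathbb{X}(\alpha)|}<\frac{1}{|\mathbb{X}(\alpha')|}$, leaning on Lemma~\ref{lemma:partitioned} for the partition into rule sets. You instead work directly from Definition~\ref{def:complexity}: the no-knowledge baseline is pinned at $|\mathbb{X}(\alpha)|$ by the fallback clause, and the parsing hypothesis supplies a rule-generated set containing $p$ whose finiteness (length-bounded rules over a finite alphabet) gives the strict inequality against the countably infinite $\mathbb{X}(\alpha)$. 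Your version is arguably tighter on the point where the paper is weakest — subtracting cardinalities of infinite sets to get $|\mathbb{X}(\alpha)|-|\xi_k|$ is not meaningful, whereas your finite-versus-infinite comparison is — and you correctly flag the vacuous-knowledge edge case that the lemma statement glosses over. What the paper's route buys is that it covers prior knowledge that merely excludes candidates without providing any generating rule for $p$ (knowing $p_k\neq p$), which your argument does not address, and it sets up the probability notation reused in Theorem~\ref{thm:infoGain}. One step of yours deserves a caveat: from ``each segment of a parsing in $[p]$ lies in some $\xi_i(\alpha_i,aux_i)$'' you conclude $p\in\bigcup_i\xi_i(\alpha_i,aux_i)$, but the union contains the segments, not automatically their concatenation; the paper only bridges parsings to complexity later via the chain rule (Section~\ref{subsec:chain}), so you should either assume $p$ itself is generatable or bound the complexity by the product of the segment rule-set sizes rather than by the union.
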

\begin{proof}
As shown in Lemma~\ref{lemma:partitioned}, $\mathbb{X}(\alpha)$ can be disjoint into countably infinite subsets. 
Assume that prior knowledge 
%is defined as a reduction of the search space and 
is denoted by $\mathbb{X}(\alpha')$, where $\alpha' \subseteq \alpha$. 
%In other words, to be able to refer to prior knowledge, $\mathbb{X'}(\alpha') $ should be a subset of $\mathbb{X}(\alpha)$. 
%In the worst case, 
Regardless of the alphabets $\alpha$ and $\alpha'$, $\mathbb{X}(\alpha) \setminus \mathbb{X}(\alpha ') = \xi_k$, where $k \in \mathbb{Z}^+$
and $|\xi_k| \ge 1$. 

Let us consider a scenario in which $|\xi_k| = 1$, $\xi_k : (\alpha,aux_k)=\{p_k\}$. 
In other words, an attacker knows that $p_k \neq p$. 
%$\mathbb{X}(\alpha')$ can be a union of all rule spaces but a smallest one, $\xi_k$, where $k \in \mathbb{Z}^+$. 
%subsets of $\mathbb{X}(\alpha)$ but a smallest one, $c_k$, where $k\in \mathbb{Z}^+$. 
%$c_k$ can be assumed to be a size of $1$ since it has to represent at least one password. 
%Note that $c_k,$ can be imaged as subsets of all possible passwords in $\mathbb{X}(\alpha)$ and might be the same as a rule or something totally different. 
%We can write $\mathbb{X}(\alpha')$ as $\mathbb{X}(\alpha')= \cup_{i=1, i\ne k}^{\infty}(c_i)$. 
Therefore, $|\mathbb{X}(\alpha')|<|\mathbb{X}(\alpha)|$ since $|\cup_{i=1}^{\infty}(\xi_i)| -  |\cup_{i=1, i\ne k}^{\infty}(\xi_i)|=|\xi_k|=1$. 

The minimal knowledge an attacker $A$ can have about a password $p$ is that $p \not\in \mathbb{X}(\alpha')$. 
Without prior knowledge, the password $p$ can be guessed with a probability of $prob_{p\in\mathbb{X}(\alpha)} = \frac{1}{|\mathbb{X}(\alpha)|}$. 
With prior knowledge (i.e., excluding $\xi_k$), the password $p$ can be guessed with a probability of $prob_{p\in\mathbb{X}(\alpha')} = \frac{1}{|\mathbb{X}(\alpha')|}=\frac{1}{|\mathbb{X}(\alpha)|-|\xi_k|}$ $\Rightarrow prob_{p\in\mathbb{X}(\alpha)} < prob_{p\in\mathbb{X}(\alpha')}$. 
Therefore, we can conclude that the complexity of $p$ decreases when there is prior knowledge. 
\end{proof}

Let us recall Theorem~\ref{thm:informationGain} which states that the smaller the search space $\mathbb{X}(\alpha)$ the smaller is the complexity of a password $p$. 

\begin{thm}[Information Gain]
\label{thm:infoGain}
The complexity of a password $p$ decreases when an adversary $A$ gains more information. 
%
%When an adversary gains more information (i.e., search space of a password gets smaller), the guessing probability of a password increases. 
\end{thm}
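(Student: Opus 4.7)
The plan is to reduce the theorem to an iterative application of Lemma~\ref{lemma:aprior}, using Lemma~\ref{lemma:partitioned} to represent each piece of gained information as a restriction on the disjoint-block cover of $\mathbb{X}(\alpha)$, and Lemma~\ref{lemma:rule} to guarantee that the restricted search space is itself expressible via a combination of rules (hence admissible under Definition~\ref{def:complexity}).

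First I would invoke Lemma~\ref{lemma:partitioned} to write $\mathbb{X}(\alpha) = \bigcup_{i=1}^{\infty} b_i$ as a disjoint union of rule-generated blocks. Any increment of information the adversary $A$ obtains about $p$ (a leaked dictionary, a known policy constraint, a scraped social-media token, a PCFG-trained grammar, etc.) can then be encoded as a sub-collection $S \subseteq \mathbb{N}$ of block indices remaining consistent with that information, so that the effective search space shrinks from $\mathbb{X}_S(\alpha) := \bigcup_{i \in S} b_i$ to $\mathbb{X}_{S'}(\alpha) := \bigcup_{i \in S'} b_i$ with $S' \subsetneq S$ and the unique block containing $p$ still present in $S'$. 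By Lemma~\ref{lemma:rule} each such $\mathbb{X}_S(\alpha)$ is itself a combination of rules and hence a valid candidate set in Definition~\ref{def:complexity}.

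The core step is then a single-increment argument: for any $S' \subsetneq S$ containing the block of $p$, Lemma~\ref{lemma:aprior} applies directly, giving $|\mathbb{X}_{S'}(\alpha)| < |\mathbb{X}_S(\alpha)|$ and, equivalently, $prob_{p \in \mathbb{X}_S(\alpha)} < prob_{p \in \mathbb{X}_{S'}(\alpha)}$. By induction on $|S \setminus S'|$ (or on the number of information-gain steps) this strict inequality iterates, so the smallest rule-generated subset of $\mathbb{X}(\alpha)$ containing $p$ can only shrink as $A$ accumulates information. By Definition~\ref{def:complexity} this smallest subset is precisely the complexity of $p$, which therefore decreases monotonically, as claimed.

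The main obstacle will be formalising ``gains more information'' in a uniform way that covers the heterogeneous attack models advertised earlier in the paper (wordlists, PCFGs, Markov models, NLP-derived tokens, policy disclosure). The cleanest framing is to identify any unit of information with the act of excluding at least one block $b_k$ from the current cover, after which Lemma~\ref{lemma:aprior}'s strict-inequality argument transfers almost verbatim. A subsidiary technical point is the case where the newly-excluded blocks have infinite cardinality: one must state the conclusion in terms of the smallest rule-generated superset of $\{p\}$ (which is well-defined and finite once a nontrivial rule applies to $p$) rather than in terms of raw ratios of $|\mathbb{X}(\alpha)|$, so that the complexity comparison remains meaningful even when the ambient profinite set is countably infinite.
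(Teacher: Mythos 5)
Your proposal is correct and follows essentially the same route as the paper: the paper's proof also reduces the theorem to Lemma~\ref{lemma:aprior}, modeling each increment of information as the exclusion of one more subset $\xi_k,\xi_l,\dots$ from the partition of $\mathbb{X}(\alpha)$, yielding the nested chain $\mathbb{X}(\alpha'')\subset\mathbb{X}(\alpha')\subset\mathbb{X}(\alpha)$ and the corresponding strict inequalities on cardinalities and guessing probabilities. Your version merely makes the iteration explicit as an induction on the number of excluded blocks and adds the (worthwhile) caveats about admissibility under Definition~\ref{def:complexity} and infinite-cardinality blocks, which the paper glosses over.
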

\begin{proof}
As shown in Lemma~\ref{lemma:aprior}, an adversary has an advantage when there is prior knowledge. 
Assume that $\mathbb{X}(\alpha')$ and $\mathbb{X}(\alpha'')$ represent subsets of $\mathbb{X}(\alpha)$ such that $\mathbb{X}(\alpha')= \cup_{i=1, i\ne k}^{\infty}(c_i)$ and $\mathbb{X}(\alpha'')= \cup_{i=1, i\ne k,l}^{\infty}(\xi_i)$, respectively. 
%Note that 
By definition $\alpha''\subseteq\alpha'\subseteq\alpha$. 
Regardless of the size of the subsets $\xi_k$ and $\xi_l$ and all three alphabets ($\alpha,\alpha',\alpha''$), $|\mathbb{X}(\alpha'')|<|\mathbb{X}(\alpha')|<|\mathbb{X}(\alpha)| \Rightarrow prob_{p\in\mathbb{X}(\alpha'')}>prob_{p\in\mathbb{X}(\alpha')}>prob_{p\in\mathbb{X}(\alpha)}$ as proven in Lemma~\ref{lemma:aprior}. 
Thus, we conclude that more prior information means smaller search space and the complexity of a password decreases.  
\end{proof}

\subsection{Rule-based Complexity Lower and Upper Bounds}
Password complexity depends on the password-design process and maximum complexity is achieved when each password character is independently drawn from uniformly distributed alphabet analogous to the result which shows the maximum entropy is achieved under the same conditions~\cite{shannon:1948}. 
The following lemma shows that independently drawn samples from uniformly distributed alphabet provide the maximum search space cardinality among all other distributions supporting the same alphabet. 

\begin{lemma}[Uniformly Distributed - 1]
\label{lemma:ud1}
The maximum complexity  of a password $p\in\xi(\alpha,aux)$ is obtained if and only if each character in $p$ is pulled from a uniformly distributed input set. 
\end{lemma}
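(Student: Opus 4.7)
The plan is to reduce the statement to the classical result that Shannon entropy is maximized by the uniform distribution, and to work the two directions of the biconditional separately. The setup I would adopt: fix a rule $\xi(\alpha,aux)$ whose $aux$ pins the password length to $L$ and specifies that each of the $L$ positions is filled by an iid draw from some distribution $\mathcal{D}$ on $\alpha$. The rule set is then supported on $(\operatorname{supp}(\mathcal{D}))^L \subseteq \alpha^L$, and its effective complexity---the search-space size an attacker who knows $\xi$ must cover---is $2^{L\cdot H(\mathcal{D})}$, where $H(\mathcal{D})$ is Shannon entropy in bits. This identification is the bridge I need between Definition~\ref{def:complexity} and the information-theoretic machinery that makes the lemma true.

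For the ``if'' direction, I would take $\mathcal{D}$ uniform on $\alpha$, note that then $H(\mathcal{D})=\log_2|\alpha|$, and conclude that the complexity is $2^{L\log_2|\alpha|}=|\alpha|^L$. Since every rule of the stated form produces a subset of $\alpha^L$, this value is an upper bound, so uniform draws attain the maximum. For the ``only if'' direction, I would invoke Gibbs' inequality: $D(\mathcal{D}\,\|\,\mathcal{U})\ge 0$ with equality iff $\mathcal{D}=\mathcal{U}$, which rearranges to $H(\mathcal{D})\le\log_2|\alpha|$ with equality iff $\mathcal{D}$ is uniform. Any non-uniform $\mathcal{D}$ therefore yields strictly smaller $H(\mathcal{D})$, hence strictly smaller $2^{L\cdot H(\mathcal{D})}$, contradicting maximality. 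The contrapositive then gives the required ``only if.''

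The main obstacle is the reconciliation step: Definition~\ref{def:complexity} reads complexity as the cardinality of a set of generated strings, and a distribution with full but non-uniform support generates the same set $\alpha^L$ as the uniform distribution does, which would trivially contradict the lemma under a naive reading. I would address this by arguing that in a distributional rule the \emph{smallest} containing subset must be interpreted as the effective support with respect to the guessing order an attacker knowing $\mathcal{D}$ would use---this is precisely the interpretation consistent with the chain-rule and order-aware chain-rule notions the paper develops later in Section~\ref{sec:ruleBasedPasswordComplexity}. Under that interpretation the effective size collapses to $2^{L\cdot H(\mathcal{D})}$ and the entropy argument applies verbatim.

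I would close by pointing out that the argument generalizes immediately from iid characters of length $L$ to rules whose $aux$ specifies a finite mixture of lengths, by applying the same Gibbs-inequality step positionwise (chain rule for entropy) and summing; this prepares the ground for the order-aware refinements that follow.
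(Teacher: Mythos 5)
Your proof is correct in substance but takes a genuinely different route from the paper's. The paper proves Lemma~\ref{lemma:ud1} by applying the AM--GM inequality directly to the character probabilities $prob_1,\dots,prob_n$ (with $\sum_i prob_i = 1$), obtaining $\prod_i prob_i \le n^{-n}$ with equality iff all $prob_i$ are equal, and then reading the reciprocal of that product as the password's complexity. You instead pass through Shannon entropy, identify the effective search-space size with $2^{L\cdot H(\mathcal{D})}$, and invoke Gibbs' inequality ($D(\mathcal{D}\,\|\,\mathcal{U})\ge 0$ with equality iff $\mathcal{D}=\mathcal{U}$) to get $H(\mathcal{D})\le\log_2|\alpha|$. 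The two arguments are cousins --- both reduce to concavity of the logarithm and both locate the equality case at the uniform distribution --- but they bound different functionals: the paper bounds the probability of the specific password under an independence assumption, while you bound the exponentiated entropy of the generating distribution. Your version buys a cleaner handling of the biconditional (the two directions are cleanly separated, and the strictness of Gibbs' inequality gives the ``only if'' without ambiguity) and generalizes more gracefully to mixed lengths via the chain rule for entropy. Most importantly, you explicitly confront the mismatch between Definition~\ref{def:complexity} (a set-cardinality notion) and any distributional reading of the lemma --- a non-uniform full-support distribution generates the same set $\alpha^L$ as the uniform one, so a literal reading makes the lemma vacuous --- and you resolve it by an ``effective support under the attacker's guessing order'' interpretation. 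The paper's proof silently makes the same substitution (treating $1/\prod_i prob_i$ as the complexity) without acknowledging it, so your treatment is, if anything, the more careful of the two.
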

\begin{proof}
As defined in Section~\ref{sec:prelim}, a password $p$ is a finite string over $\alpha$. 
Thus, let us assume that 
\begin{enumerate}
\item $p$ is composed of a finite number of strings such that $p=\{l_1,l_2,\dots,l_n\}$, where $|p|=n$ and $n\in Z^+$ is finite. 
\item The probabilities of $p=\{l_1,l_2,\dots,l_n\}$ is a set of positive real numbers $\{prob_1,prob_2,\dots,prob_n\}$, such that $l_i$ corresponds to $prob_i$. 
Note that $\sum_{i=1}^n (prob_i) = 1$. 
\end{enumerate}

We can use the inequality of arithmetic and geometric means (AM-GM inequality)~\cite{amgm:2007} to prove this lemma. 
AM-GM inequality states that the arithmetic mean of a list of non-negative real numbers is greater than or equal to the geometric mean of the same list. 
$\frac{prob_1+prob_2+\dots+prob_n}{n} \ge (prob_1\times prob_2 \dots \times prob_n)^{1/n}$
$\Rightarrow \frac{1}{n} \ge (prob_1\times prob_2 \dots \times prob_n)^{1/n}$
$\Rightarrow \frac{1}{n^{n}} \ge \frac{1}{(prob_1\times prob_2 \dots \times prob_n)}$. 
The equality holds if and only if $\forall prob_i, i\in[1,n]$ are equal. 
Thus, if each password character should be pulled from a uniformly distributed input set to obtain maximum complexity. 
\end{proof}

The following lemma show that the maximum complexity of a password $p$ can be obtained if and only if each password in $\mathbb{X}(\alpha)$ is equally likely. 

\begin{lemma}[Uniformly Distributed - 2]
\label{lemma:ud2}
The maximum complexity of a randomly created password in $\mathbb{X}(\alpha)$ is obtained if and only if $\mathbb{X}(\alpha)=\{p_1,p_2,\dots,p_n\}$ is uniformly distributed for any finite $n \in \mathbb{Z}^+$. 
\end{lemma}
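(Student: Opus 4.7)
The plan is to mirror the structure of the proof of Lemma~\ref{lemma:ud1}, lifting the argument from the character level to the password level. The conclusion of Lemma~\ref{lemma:ud1} was obtained by applying the AM--GM inequality to the probability masses of the alphabet symbols; exactly the same machinery applies once we view each element of $\mathbb{X}(\alpha)$ as a ``symbol'' in its own right, with an associated probability of being selected as the password.

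Concretely, I would set up the proof as follows. First, enumerate $\mathbb{X}(\alpha)=\{p_1,p_2,\dots,p_n\}$ for finite $n\in\mathbb{Z}^+$ and let $\{prob_1,prob_2,\dots,prob_n\}$ be the probabilities with which a randomly created password equals $p_i$, satisfying $prob_i>0$ and $\sum_{i=1}^n prob_i = 1$. Then apply AM--GM to these $n$ positive reals to obtain
\begin{equation*}
\frac{1}{n} \;=\; \frac{prob_1+prob_2+\dots+prob_n}{n} \;\ge\; \bigl(prob_1\cdot prob_2\cdots prob_n\bigr)^{1/n},
\end{equation*}
which rearranges to $\frac{1}{n^{n}} \ge prob_1\cdot prob_2\cdots prob_n$, equivalently $\frac{1}{prob_1\cdot prob_2\cdots prob_n} \ge n^{n}$. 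The equality clause of AM--GM then gives that this bound is attained \emph{if and only if} $prob_1 = prob_2 = \dots = prob_n = \frac{1}{n}$, i.e.\ if and only if $\mathbb{X}(\alpha)$ is uniformly distributed.

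The final step is to connect this bound back to complexity in the sense of Definition~\ref{def:complexity}. Following exactly the reasoning used in Lemma~\ref{lemma:ud1}, an adversary's expected effort to identify a specific $p_i$ scales inversely with $prob_i$, so shrinking the reachable subset containing $p$ is equivalent to biasing the distribution away from uniform. The joint quantity $\prod_i prob_i^{-1}$ therefore serves as a natural summary of the complexity faced across all possible draws, and the AM--GM bound shows that this summary is maximized precisely at the uniform distribution. The ``only if'' direction is immediate from the AM--GM equality condition, and the ``if'' direction follows by substituting $prob_i=1/n$ and observing that the bound is saturated.

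The main obstacle is not the inequality itself, which is essentially a one-line invocation of AM--GM identical to Lemma~\ref{lemma:ud1}; rather, it is the bookkeeping step of justifying that ``maximum complexity'' over a probability distribution on $\mathbb{X}(\alpha)$ is captured by the product $\prod prob_i^{-1}$ (or equivalently its logarithm, the entropy of the distribution). Since Lemma~\ref{lemma:ud1} already made this interpretive leap at the character level and the present lemma is stated at the password level, I would explicitly appeal to that earlier argument and note that passwords here play the role that alphabet characters played there, so no new conceptual work is needed beyond reinstantiating AM--GM on the password probabilities.
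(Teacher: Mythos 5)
Your proposal reuses the AM--GM machinery of Lemma~\ref{lemma:ud1} at the password level, which is a genuinely different route from the paper's proof: the paper does not invoke AM--GM here at all. Instead, it argues directly from Definition~\ref{def:complexity} and Theorem~\ref{thm:infoGain}: a non-uniform distribution on $\mathbb{X}(\alpha)=\{p_1,\dots,p_n\}$ is modeled as arising from distinct rules (e.g., $\xi_1$ generating $\{p_1\}$ and $\xi_2$ generating $\{p_2,\dots,p_n\}$, with $prob_1 \neq prob_2 = \dots = prob_n$), so the smallest generatable subset containing any given password is one of these proper subsets, whose cardinality is strictly less than $|\mathbb{X}(\alpha)|$; hence every password's complexity falls below the maximum. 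Under uniformity no such refinement is available and the complexity is $|\mathbb{X}(\alpha)|$.

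The gap in your version is exactly the step you flag as ``bookkeeping.'' Definition~\ref{def:complexity} defines complexity as the \emph{cardinality of the smallest rule-generatable subset of $\mathbb{X}(\alpha)$ containing $p$} --- a purely set-theoretic quantity --- whereas your argument maximizes $\prod_i prob_i^{-1}$, a functional of the selection distribution. AM--GM tells you where that product is maximized, but nothing in your proposal establishes that this product equals, bounds, or is even monotonically related to the complexity of Definition~\ref{def:complexity}; appealing to Lemma~\ref{lemma:ud1} does not close the gap, because that lemma makes the same unargued identification at the character level rather than justifying it. To repair the argument in the spirit of the paper, you would need to convert ``the distribution is non-uniform'' into ``there exists a generatable proper subset of $\mathbb{X}(\alpha)$ containing $p$'' --- which is what the paper's $\xi_1/\xi_2$ construction accomplishes --- and then conclude via Theorem~\ref{thm:infoGain} that the complexity strictly decreases.
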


\begin{proof}
We prove this lemma by using Def.~\ref{def:complexity} and Theo.~\ref{thm:infoGain}. 
The complexity is defined as a size of the smallest subset of $\mathbb{X}(\alpha)$ in Def.~\ref{def:complexity}. 
Theo.~\ref{thm:infoGain} shows that when an adversary $A$ gains more information about $\mathbb{X}(\alpha)$, the complexity of $p$ decreases. 

Let us assume that all passwords are generated by the same set of rules but $p_1$ (i.e., $\xi_1(\alpha_1\subseteq \alpha,aux_1) \rightarrow \{p_1\}$ and $\xi_2(\alpha_2\subseteq \alpha,aux_2)\rightarrow \{p_2,\dots,p_n\}$) and the probabilities of correctly guessing passwords by an adversary are $\{prob_1,prob_2,\dots,prob_n\}$, where a probability of guessing a password $p_i$ corresponds to $prob_i$. 
The relationship between the probabilities as $prob_1 \neq prob_2 = prob_3 = prob_4 = \dots = prob_n$ and $\sum_{i=1}^n(prob_i)=1$. 
The complexity of any password created by $\xi_1$ and $\xi_2$ are $|\xi_1|$ and $|\xi_2|$, respectively (see Def.~\ref{def:complexity}.) 
The complexity of $\mathbb{X}(\alpha)$, $|\mathbb{X}(\alpha)|$, is strictly greater than the complexities of $\xi_1$,$|\xi_1|$, and $\xi_2$,$|\xi_2|$, since there is an injective function, but no bijective function, from $\mathbb{X}(\alpha)$ to $\xi_1$ and $\xi_2$. 
The reason is that $p_1$ is not created by $\xi_2$ and the rest of the possible passwords (i.e., $\{p_2,\dots,p_n\}$) are not generated by $\xi_1$. 
If a password $p \not\in \xi_1$, then the probability of $p$is $\frac{1}{|\xi_2|} > \frac{1}{|\mathbb{X}(\alpha)}|$.

Thus, the maximum complexity can be obtained only all passwords in $\mathbb{X}(\alpha)$ created by a rule requiring all guessing probabilities of passwords are equally likely. 
\end{proof}

Let us first calculate the upper bound for a password $p$. 
The password-policy requirements generally provide the minimum password length, denoted by $k$ and alphabet $\alpha$. 
The maximum password length can be defined in the policy or we can get the length of the longest password, denoted by $l$, from a password database storing the existing passwords or just length of the maximum password (see Fig.~\ref{fig:ruleBasedEngine}.) 
The upper bound for our rule-based complexity is calculated as: 
\begin{equation}
\eta_{upper} = |\sum_{i=k}^{l} (\alpha)^i|
\end{equation}
$\eta_{upper}$ is $|\mathbb{X}(\alpha)|$ (i.e., the cardinality of all possible passwords) and the upper bound is the same for any rule in the set of rules and a disjoint set $\xi_i$ (see in Lemma~\ref{lemma:aprior}) of $\mathbb{X}(\alpha)$. 

The following equation provides a lower bound ($\eta_{lower/\xi}(p)$) on an adversary's effort to guess a password $p$ based on our rule-based complexity measure:

\begin{equation}
\label{eqn:lb}
  \eta_{lower/\xi}(p)=\begin{cases}
  	\begin{split}
     min(|\xi_1|,\dots,|\xi_N|,\dots), \\
      \exists |\xi_i| \text{ is bounded} \end{split}\\
    \eta_{upper} \text{ otherwise},
  \end{cases}
\end{equation}
where $i\in[ \mathbb{Z}^+$.
When a password $p$ is not part of a given rule set $\xi_i$, then for the lower bound calculation $|\xi_i|\rightarrow \infty$. %, where $i\in[1,\infty)$. %Eq.~\ref{eqn:lb} provides the complexity of $p$ per a rule if $p$ can be explained by a rule (e.g., dictionary words.) 

As shown in Eq.~\ref{eqn:lb},  $\eta_{lower/\xi}(p)$ may have one of two possible outcomes. 
If a password is not a member of $\Xi$, the lower and upper bounds are equal. 
The complexity of $p$ equals to $|\xi_i|$ when the password is a part of the corresponding rule set, $\xi_i$. 
Note that $p$ can possibly be generated by more than one rule and then its complexity is the smallest cardinality of the all these rules. 
%If $p$ can be represented by more than one rules, the complexity of $p$ is the minimum of the cardinalities of all representative rules. 

\subsection{Chain Rule Provides Complexity of Passwords Having Composite Structures}
\label{subsec:chain}
The chain rule considers the case when there is more than one pattern in a password $p$ and an adversary $A$ needs to use the combination of rules to crack the password. 

Due to the improved password-policies and richer alphabets, passwords generally have multipart structures such as combination of upper-case letters, lower-case letters, numbers, and characters. 
We define the rules as a part of $\mathbb{X}(\alpha)$ (see Definition~\ref{def:rule} and Lemmas~\ref{lemma:rule}-\ref{lemma:partitioned}) and a rule generally represents a small portion of $\mathbb{X}(\alpha)$. 
For example, if $\alpha$ is lower-case english letters and numbers from zero to nine, a rule of dictionary words only represents a small portion of $\mathbb{X}(\alpha)$.  
Thus, it is expected that passwords are generally the combination of various rules. 
The calculation of the complexity should reflect an accumulation of these small search spaces defined by the combination of rules (see Def.~\ref{def:unionofRules}) and/or a rule as shown below:
 
\begin{equation}
\label{eqn:rulechain}
  \eta_{lower/c}(p)=\begin{cases}
  \begin{split}
     min(|(c_1)|,\dots,|c_r|,\dots), \\
      \text{if } \exists |c_r| \text{ is bounded} \end{split} \\
    \eta_{upper} \text{otherwise}, 
\end{cases}
\end{equation}
where $c$ represents all possible combinations of rules in $\Xi$ and  $r\in \mathbb{Z}^+$. 
We assume that if a password $p$ is not a member of  a subset $c_i$, then $|c_i|\rightarrow \infty$, where $i \in \mathbb{Z}^+$. 
%Note that $c_i$ may represent a rule or a combination of rules. 
For example, $c_i$ might be a rule of dictionary words (dict.) or it might be a combination of two rules such as the first character of a password is a number and the rest of a password is dictionary words (e.g., $\underbrace{\hbox{1}}_{\hbox{\tiny{Number}}}\underbrace{\hbox{Love}}_{\hbox{\tiny{Dict.}}}\underbrace{\hbox{Soccer}}_{\hbox{\tiny{Dict.}}}$)

\subsection{Password Parsing Provides More Accurate Complexity Calculation}
Brute-force (or exhaustive search) attack is the last resort for cracking a password since it is the least efficient method. 
It requires to systematically try all the combinations. 
Brute-force always cracks a password when there is no time constraint. 
Furthermore, if a password has a predictable structure, it makes exhaustive search feasible. 
As explained up to here, our main goal is to provide better sense of security for a user. 
Therefore, we want to be conservative with the rule-based complexity calculation. 
To provide better feedback to a user, our rule-based complexity engine parses the given password to extract various patterns. 

A user can use default and/or a user-defined password parsing mechanism in order to extract patterns in a given password (see Fig.~\ref{fig:ruleBasedEngine}.) 
The parser uses an alphabet, which might be the parser specific or the common alphabet that our rule-based password complexity engine uses, to extract the patterns in a given password. 
Assume that a parser uses lower and upper case english letters and digits from zero to nine as an alphabet ($\omega = 26+26+10 = 62$) to extract the patterns with number only and three consecutive letters in a password. 
The input \textit{1LoveSoccer} can be parsed as a number of different ways such as \textit{1-Love-Soccer, 1-Lov-eSoccer, 1-LoveS-occer,} etc. 
Our rule-based complexity engine compares all these parsed results with a given rule and find the minimum search space to calculate the complexity of the password. 
For example, if we have two separate rules, namely digits ($\xi_1$) and $20K$ dictionary words ($\xi_2$), to check these parsed passwords, all extracted patterns are compared to these rules to calculate password complexity. 
Let us look at password \textit{1-Love-Soccer}. 
$\xi_1$ provides $10$ for \textit{1} (see Section~\ref{subsec:chain}), and $\infty$ for both \textit{Love} and \textit{Soccer} and  $\xi_2$ gives $\infty$ for \textit{1}, and $20K$ for both \textit{Love} and \textit{Soccer}. 
For the purpose of readability, we use $log$ scale for the following calculations. 
Thus, $\rho_{(I-Love-Soccer)}=\log_2(10\times20K\times20K)=31.8974$. 
 Now, let us calculate the complexity for \textit{1-Lov-eSoccer}. $\xi_1$ provides $\infty$ for all \textit{1}, \textit{Lov}, and \textit{eSoccer} and $\xi_2$ gives $10$ for \textit{1}, and $\infty$ for both \textit{Lov} and \textit{eSoccer}. 
Thus, $\rho_{(I-Lov-eSoccer)}= min(\log_2(62\times62^3\times62^7),\log_2(10\times62^3\times62^7)) = 65.4962$ (see Section~\ref{subsec:chain}.) 
After calculating all parsed result, the complexity of the password \textit{ILoveSoccer} is $min=(\rho_{(I-Love-Soccer)},\rho_{(I-Lov-eSoccer)},\dots)=\rho_{(I-Love-Soccer)}=31.8974$. 

\subsection{Order-Aware Chain Rule Complexity}
\label{subsec:awareOrder}
In brute-force cracking, an attacker tries every possible string in $\mathbb{X}(\alpha)$ until it succeeds. 
More common methods of password cracking, such as dictionary attacks, pattern checking, word list substitution, etc. attempt to reduce the number of trials required and will usually be attempted before exhastutive search. 
In other words, there are probable paths that an attacker can try to recover a password before trying all combinations in $\mathbb{X}(\alpha)$. 
If we have an idea of these probable paths such as an attacker checks dictionary words before word list substitution, our rule-based engine can incorporate this information into the complexity calculation as shown below: 
\begin{equation}
\label{eqn:awareOrder}
\eta(p) = min(\eta_{upper}, (|\xi_i|+|\xi_j|+\dots+|\xi_k|))
\end{equation} 
where $p$ is generatable by $\xi_k$ and an attacker $A$ tries $\xi_i$ before $\xi_j$, $\xi_j$ before $\xi_k$ and so on. 
$\forall \xi_i\in\mathbb{X}(\alpha), i \in \mathbb{Z}^+$. 
%Note that even if there are rules after $c_k$, their contributions to the complexity of $p$ is ignored for simplicity. 

Fig.~\ref{fig:topology} presents an example of a permutation of rules. 
In this scenario, the directed graph has three nodes and two edges, $G=(V_1,E_1),\text{ where } V_1 = (\xi_i,\xi_j,\xi_k), E_1=(a_{ij},a_{jk}),\text{ and } a_{ij}=(\xi_i,\xi_j), a_{jk}=(\xi_j,\xi_k)$.   
As defined in Def.~\ref{defn:ruleTopology} and formulated in Eq.~\ref{eqn:awareOrder}, the order of evaluation of a password $p \in \mathbb{X}(\alpha)$'s complexity is $\xi_i$, $xi_j$, and then $\xi_k$. 

When there is no idea about the order of rules, the complexity calculations can use the minimum of all possible orders to provide a lower bound to a user as shown below:
\begin{equation}
\label{eqn:awareOrderMin}
\eta_{lower}(p) = min\{\eta_{upper},|\binom{\Xi}{i}|_{i=[1,\infty)} \}
\end{equation} 

\begin{figure}[htbp]
    \includegraphics[width=3.50in]{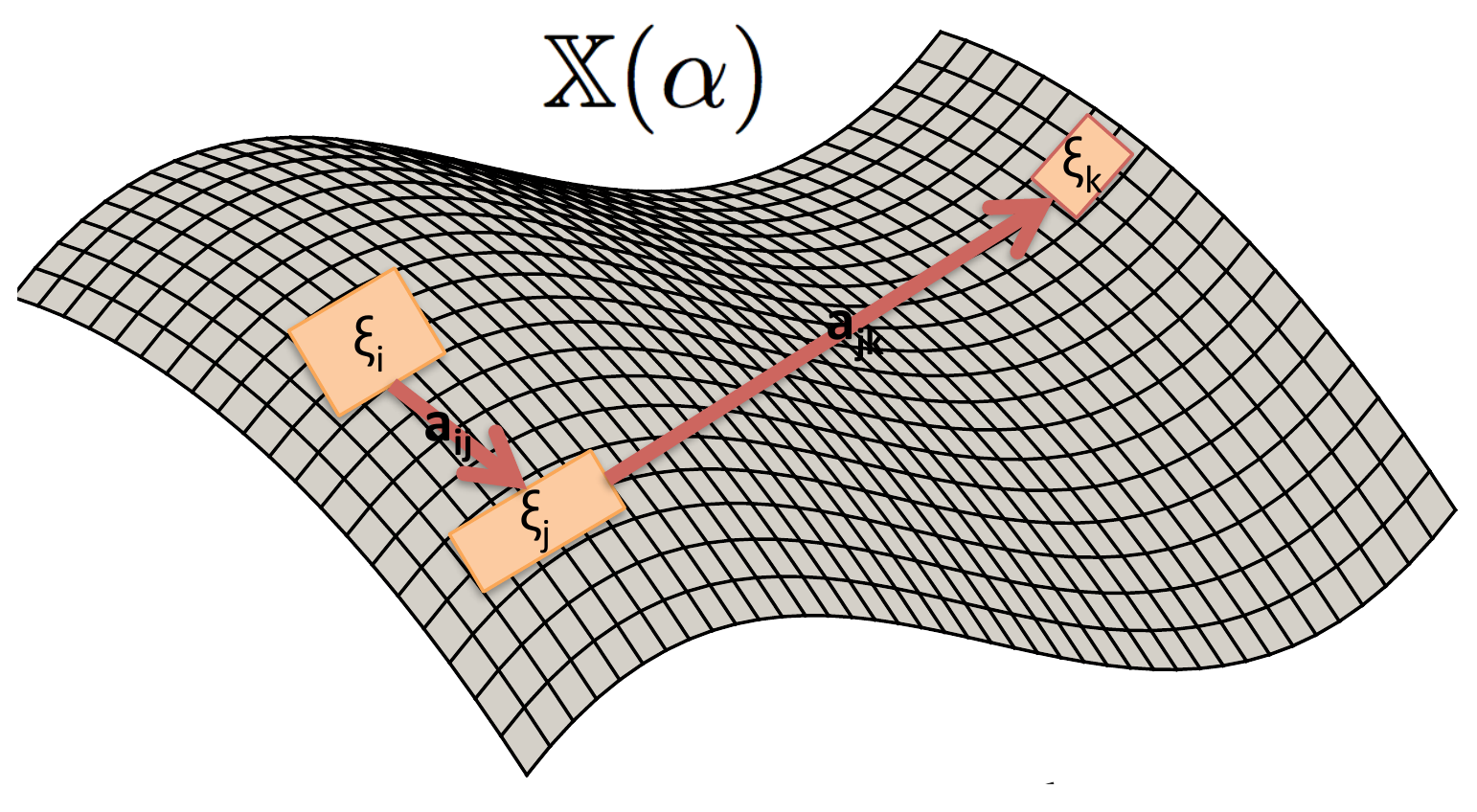}
    \caption[Topology]{An example of a permutation of rules (see Defn.~\ref{defn:ruleTopology}) shown as a directed graph $G$ such that $G=(V_1,E_1),\text{ where } V_1 = (\xi_i,\xi_j,\xi_k), E_1=(a_{ij},a_{jk}),\text{ and } a_{ij}=(\xi_i,\xi_j), a_{jk}=(\xi_j,\xi_k)$.}
    \label{fig:topology}
\end{figure}

\subsection{Password FAT-Strength}
Password FAT-strength is a calculation of the effectiveness of a password in resisting guessing and brute-force attacks. 
To ensure an acceptable level of security, our framework provides FAT-strength of a password defined in Def.~\ref{def:fat-security}. 

Most of the password strength meters categorize a password as very weak, weak, strong, and very strong~\cite{inglesant:2010}. 
They do not use the estimated time-to-crack (except passfault), an adversary's computational power, or the user's online presence. 
However, the estimation of a password strength should be a function of endurance to brute-force attack. 
Our hypothesis given in Eq.~\ref{eqn:strength} uses the factors that can be used by an adversary to calculate a password's FAT-strength. 
For example, one or more rules can be extracted from a user's online presence (e.g., facebook account). 
When a user creates a password using personal information that is publicly available, our framework has the ability to incorporate this customized information into the set of rules $\Xi$. 

If $p\in\xi_i$, then our FAT-strength calculation will include this information in the cardinality of the complexity as given below: 

\begin{equation}
\label{eqn:strength}
F \times s(t) \times \frac{1}{\mu_p(t)}\times |\eta| \underset{H_0}{\overset{H_1}{\gtrless}} T
 \end{equation}
 where $F$ is a function of a type of password storage (e.g., one-way hash function), $s(t)$ is the computation power used by an adversary to crack a password $p$, $\mu_p(t)$ is the number of parallel processors, and $T$ shows the acceptable time-to-crack that can be defined by a user or calculated from password change policy. 
 $H_{1}$ represents $FAT$-strong $p$ and $H_{0}$ represents a scenario in which $p$ is not $FAT$-strong. 
 Note that the computational power $s(t)$ is a function of time since it incorporates Moore's law (see Table~\ref{table:moore}) into account while calculation the FAT-strength.

\begin{table}[htbp]
        \caption{Expected changes in the relative computing power based on Moore's law}
        \label{table:moore}
        %\markTable{U}{bottomRight}{%
            \begin{tabular}{|c|r|}%
                \hline
                \bf{Year} & \bf{Relative Computing Power} \\ \hline
                2015 & 1 x \\ \hline
                2025 & 32 x \\ \hline
                2035 & 1024 x \\ \hline
                2045 & 32768 x \\ \hline
            \end{tabular}
        %}%
\end{table}

Fig.~\ref{fig:strength} shows a high-level model of FAT-strength calculation given in Eq.~\ref{eqn:strength}. 
Strength framework uses password protection methods, $F$, and the expected life time of $p$, $T$, as auxiliary parameters. 
Imagine that certain rules $\xi_i,\forall i\in[1,n]$ and computational power $s(t)$ can be modeled as an adversarial capabilities. 
$s(t)$ follows the Moore's law  (i.e., an adversary's computational capacity doubles every other year); however, it can also be fed into the strength framework as a different function. 
For example, if an adversary is a known-state actor and improves its computation capacity every month, this information can be incorporated into $s(t)$. 
The complexity framework provides the cardinality of an estimated complexity of $p$. 
As explained in previous sections, various number of complexities are calculated by our framework. 
$|\eta|$ is the minimum of all calculations if a user does not enforce a certain complexity calculation (e.g., order aware chain rule complexity.) 
The strength framework uses the binary test of hypothesis to decide between $H_1$ and $H_0$ which indicates whether $p$ is $FAT$-strong or not. % present and absent of FAT-strength for $p$, respectively. 

\begin{figure}[htbp]
    \includegraphics[width=3.75in]{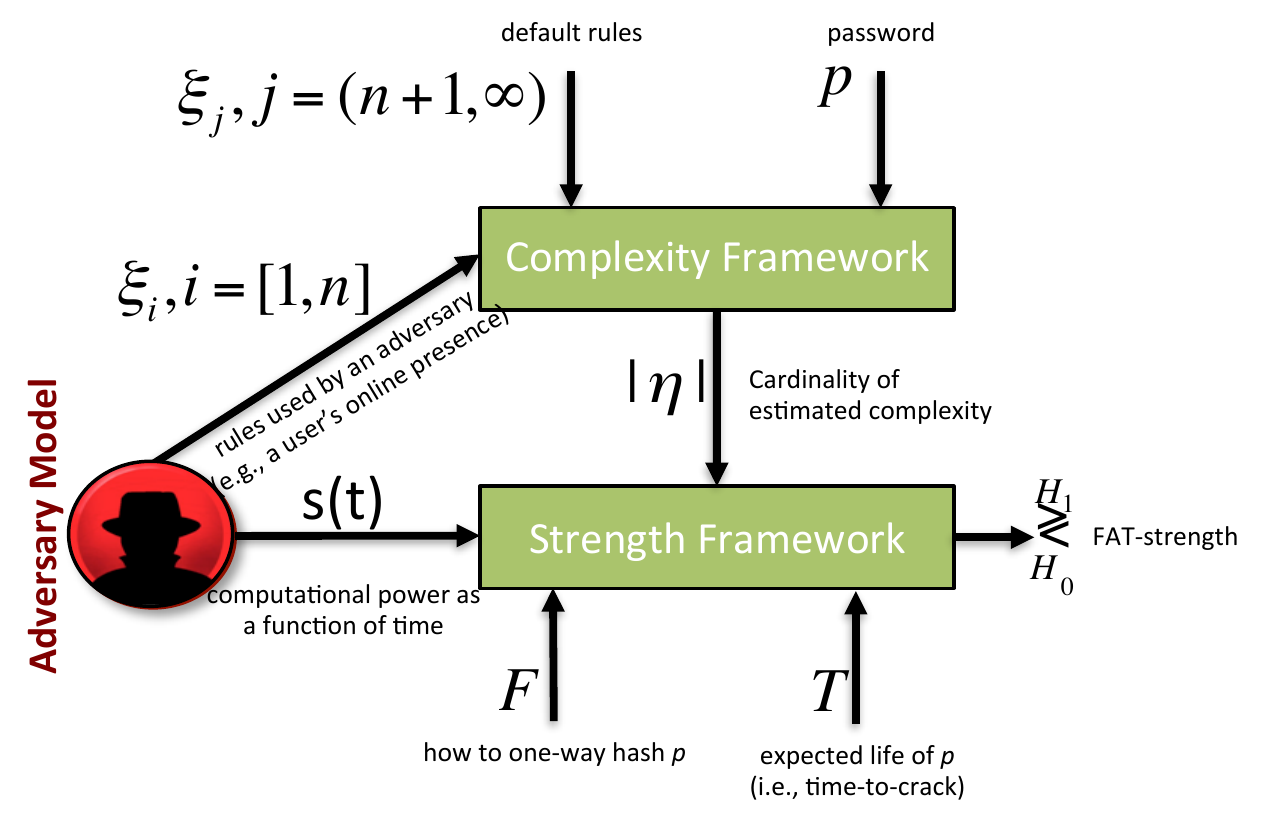}
    \caption[FAT-Strength]{FAT-strength hypothesis testing given in Eq.~\ref{eqn:strength} uses adversary model, user-defined variables (e.g., $F$ and $T$), and complexity of $p$.}
    \label{fig:strength}
\end{figure}

\section{Conclusions} \label{sec:conclusion}
In this paper we formalize the concepts of password complexity and password strength and propose a novel approach to calculate password strength and complexity while providing a general framework for analyzing/comparing other available password strength/complexity estimators. 
Our framework incorporated human biases into our calculation so that lower-bound of a password strength and complexity can be provided to a user. 
The key insight we employ is that a brute-force attacker does not assume all guesses are equally likely, so one should not assume all possible passwords are equally good. 
As a result, our framework to calculating password strength and complexity uses the idea that some guesses are far better than others since human-based password choices are not random. 
Furthermore, our approach can easily be generalized to accommodate other methods for storing secret information and authenticating identities and/or accounts. 
%\end{document}  % This is where a 'short' article might terminate
%ACKNOWLEDGMENTS are optional
%\section{Acknowledgments}

%================== Appendix
%\appendix
%\input{exampleFodder/fodderAppendix}

%==================Back Matter: Glossary & Notation

%% Use these commands to produce the jobname.bbl file.
%% Then comment them out and copy the .bbl file into
%% your .tex file at this point, and add the \markItem{<letter>}
%% classification markup after each \bibitem{}.
%% See llreportDoc.pdf for more information.

%\input{bibliography}
\bibliographystyle{IEEEtran}
\bibliography{passwords}

% You can push biographies down or up by placing
% a \vfill before or after them. The appropriate
% use of \vfill depends on what kind of text is
% on the last page and whether or not the columns
% are being equalized.

%\vfill

% Can be used to pull up biographies so that the bottom of the last one
% is flush with the other column.
%\enlargethispage{-5in}

% that's all folks
\end{document}